\documentclass[12pt]{article}

\usepackage{amsthm,amsmath,amssymb,bbm,bm}
\usepackage{natbib}
\usepackage{multirow}
\usepackage[pdftex]{graphicx}
\usepackage{subfigure}
\usepackage{wrapfig}

\usepackage{array}
\usepackage{url}

\usepackage{algorithmic}
\usepackage[linesnumbered, ruled, vlined]{algorithm2e}

\usepackage{mathrsfs}
\usepackage{dsfont}
\usepackage{titling}

\usepackage{relsize}
\usepackage{rotating}
\usepackage{enumitem}
\usepackage{float}
\usepackage[usenames,dvipsnames,svgnames,table]{xcolor}
%
\usepackage{setspace}


\newtheorem{remark}{Remark}	
\newtheorem{assumption}{Assumption}

\newtheorem{theorem}{Theorem}[section]
\newtheorem{lemma}{Lemma}

\newtheorem{corollary}{Corollary}

\newtheorem{scenario}{Scenario}


{\textwidth=6in}

\newcommand{\bX}{{\boldsymbol{X}}}
\newcommand{\bx}{{\boldsymbol{x}}}

\newcommand{\bu}{{\boldsymbol{u}}}
\newcommand{\bw}{{\boldsymbol{w}}}
\newcommand{\bU}{{\boldsymbol{U}}}
\newcommand{\bW}{{\boldsymbol{W}}}

\newcommand{\btheta}{{\boldsymbol{\theta}}}

\newcommand{\tbin}{{\text{Bin}}}
\newcommand{\tunif}{{\text{Unif}}}
\newcommand{\tpoi}{{\text{Poi}}}
\newcommand{\tbeta}{{\text{Beta}}}
\newcommand{\tgamma}{{\text{Gamma}}}
\newcommand{\texp}{{\text{Exp}}}
\newcommand{\tnorm}{{\text{Normal}}}

\newcommand{\yifan}[1]{\textcolor{black}{#1}\xspace}

\begin{document}


\title{\vspace{-3cm}  {\LARGE A fiducial approach to nonparametric deconvolution problem: discrete case}}

\author{
Yifan Cui\thanks{Center for Data Science, Zhejiang University},
Jan Hannig\thanks{Department of Statistics and Operations Research, University of North Carolina at Chapel Hill}}
\date{}

\maketitle
\vspace{-2.2cm}
\begin{abstract}
Fiducial inference, as generalized by \cite{hannig2016generalized}, is applied to nonparametric $g$-modeling \citep{10.1093/biomet/asv068} in the discrete case. We propose a computationally efficient algorithm to sample from the fiducial distribution, and use generated samples to construct point estimates and confidence intervals.
We study the theoretical properties of the fiducial distribution and perform extensive simulations in various scenarios. The proposed approach gives rise to good statistical performance
in terms of the mean squared error of point estimators and coverage of confidence intervals.
Furthermore, we apply the proposed fiducial method to estimate the probability of each satellite site being malignant using gastric adenocarcinoma data with 844 patients \citep{10.1093/biomet/asv068}.
\end{abstract}

\noindent {\bf keywords}
Fiducial inference, Empirical Bayes,  Confidence intervals, Nonparametric deconvolution

\section{Introduction}\label{sec:intro}

\cite{Efron2014TwoMS,10.1093/biomet/asv068,narasimhan2016g} studied the following important problem:
An unknown distribution function $F(\theta)$ yields unobservable realizations $\Theta_1$, $\Theta_2$, \ldots , $\Theta_n$, and each $\Theta_i$ produces an observable value $X_i$ according to a known probability mechanism. 
The goal is to estimate the unknown distribution function from the observed data. In this paper, we aim to provide a generalized fiducial solution to the same problem in the case where $X_i$ given $\Theta_i$ follows a discrete distribution \yifan{that has known probability mass function $g_i$ and distribution function $G_i$.
Following \cite{10.1093/biomet/asv068}, we use the terminology deconvolution here as the marginal distribution of $X_i$ admits the following form} (see also Equation~(6) of \cite{10.1093/biomet/asv068} as well as Equation~(7) of \cite{narasimhan2016g}):
\begin{align}
\int g_i(x_i|\theta_i) dF_i(\theta_i). \label{eq:lk}
\end{align}

\cite{10.1093/biomet/asv068} proposed an empirical Bayes deconvolution approach to estimating the distribution of $\Theta$ from the observed sample $\{X_i, i=1,\ldots,n\}$, where the only requirement is a known specification of the distribution for $X_i$ given $\Theta_i$.
The empirical Bayes deconvolution since developed has seen tremendous success in many scientific applications including causal inference \citep{lee2019}, single-cell analysis \citep{wang2018gene}, cancer study \citep{gholami2015number,shen2019randomized}, clinical trials \citep{shen2018using,shen2019towards} and many other fields \citep{dulek2018empirical}.
Moreover, for a classic Bayesian data analysis, as noted in \cite[p19]{ross2018dirichletprocess} and \cite{gelman2013bayesian},
a single distribution prior may sometimes be unsuitable and hence the prior choice is dubious.
Efron's empirical Bayes deconvolution would be one of the alternatives since the obtained estimator of distribution of  $\Theta$ can be used as a prior distribution to produce posterior approximations \citep{narasimhan2016g}. 
\yifan{The discrete deconvolution problem is common in practice, e.g., the famous missing species problem is an example of Poisson deconvolution, and many studies in gene expression analysis use Poisson or negative binomial distribution \citep{robinson2010edger,siggenes}.}

Fiducial inference can be traced back to R. A. Fisher \citep{Fisher1930,Fisher1933}  who introduced the concept as a potential replacement of the Bayesian posterior distribution.
\cite{Hannig2009,hannig2016generalized} showed that fiducial distributions can be related to empirical Bayes methods, which are widely used in large-scale parallel inference problems \citep{efron2012large,efron2019,efron2019rejoinder}. \cite{Efron1998} pointed out objective Bayes theories also have connections with fiducial inference.
Other fiducial related approaches include Dempster-Shafer theory \citep{Dempster2008, EdlefsenLiuDempster2009,MartinZhangLiu2010,HannigXie2012}, inferential models \citep{martin2013inferential,MartinLiu2013b,martin2015inferential,martin2015marginal},
 confidence distributions \citep{SchwederHjort2002,XieSinghStrawderman2011, XieSingh2013,XieEtAl2013, schweder2016confidence, HjortSchweder2018,shen2019jasa} and higher order likelihood expansions and implied data-dependent priors \citep{Fraser2004,Fraser2011}. 

We propose a novel fiducial approach to modeling the distribution function $F$ of $\Theta$ nonparametrically.
In particular, we propose a computationally efficient algorithm to sample from the {\em generalized fiducial distribution} (GFD) (i.e., a distribution on a set of distribution functions), and use generated samples to construct statistical procedures.
The pointwise median of the GFD is used as point estimate, and appropriate quantiles of the GFD evaluated at a given point
provide pointwise confidence intervals.
We also study the theoretical properties of the fiducial distribution. Extensive simulations in various scenarios show that the proposed fiducial approach is a good alternative to existing methods such as Efron's $g$-modeling.
We apply the proposed fiducial approach to intestinal surgery data to estimate the probability of each satellite site being malignant for the patient, see  \cite{10.1093/biomet/asv068} for the empirical Bayes approach. The resulting fiducial estimate of \yifan{the} distribution function reflects the observed patterns of raw data.

The remainder of the article is organized as follows. In Section~\ref{sec:method}, we present the mathematical framework for the fiducial approach to nonparametric deconvolution problem. In Section~\ref{sec:theory}, we establish an asymptotic theory which verifies the frequentist validity of the proposed fiducial approach. Extensive simulation studies are presented in Section~\ref{sec:simulation}. We also illustrate our
method using intestinal surgery data in Section~\ref{sec:realdata}. The article
concludes with a discussion of future work in Section~\ref{sec:discussion}. Some needed technical results and additional simulations are
provided in the Appendix and Supplementary Material.

\section{Methodology}\label{sec:method}

\subsection{Data generating equation}\label{s:DGE}
In this section, we first explain the definition of a \yifan{GFD} and then demonstrate how to apply it to the deconvolution problem. We start by expressing the relationship between the data $X_i$ and the parameter $\Theta_i$ using
\begin{equation}\label{eq:DGE}
X_i=G_i^{-1}(U_i,\Theta_i),\quad \Theta_i = F^{-1}(W_i),\quad i=1,\ldots n,
\end{equation}
where $U_i,W_i$ are i.i.d. $\tunif(0,1)$, $G_i(\cdot,\theta_i)$ are known distribution functions of discrete random variables supported on integers, $G_i$ are defined and non-increasing in $\theta_i\in \mathcal S \subset \mathbbm R$ for all $i$, and $F$ is the unknown distribution function with the support in the interval $\mathcal S$.
We are interested in estimating the unknown distribution function $F(\theta)$.

Recall that $F^{-1}(w)=\inf\{\theta: F(\theta)\geq w\}$ \cite[p54]{CasellaBerger2002}, and $F^{-1}(w)=\theta$ if and only if $F(\theta)\geq w>F(\theta-\epsilon)$ for all $\epsilon>0$. We denote $G_i^{*}(x_i,u_i)=\sup\{\theta:G_i(x_i,\theta)\geq u_i\}$ with the usual understanding that $\sup \emptyset$ is smaller than all elements of $\mathcal S$.

\begin{remark}
\yifan{To facilitate a better understanding, we provide two examples for $G_i^{*}(x_i,u_i)$ here. If $X$ follows a binomial distribution, $G_i$ is the CDF of binomial distribution with number of trials $m_i$ and $G_i^{*}(x_i,u_i)$ is the $(1-u_i)$ quantile of $\tbeta(x_i+1,m_i-x_i)$;
if $X$ follows a Poisson distribution, $G_i$ is the CDF of Poisson distribution and $G_i^{*}(x_i,u_i)$ is the $(1-u_i)$ quantile of $\tgamma(x_i+1,1)$.}
\end{remark}

If $G_i(\cdot,\theta_i)$ is continuous in $\theta_i$, $G_i^{*}(x_i,u_i)$ is the solution (in $\theta_i$) to the equation $G_i(x_i,\theta_i)= u_i$. By Lemma~\ref{lemma:equiv} in Section~\ref{thm:0} of the Appendix, $x_i=G_i^{-1}(u_i,\theta_i)$ if and only if $\theta_i\in (G_i^{*}(x_i-1,u_i),G_i^{*}(x_i,u_i)]$. 
Combining $G_i^{*}(x_i-1,u_i)< \theta_i \leq G_i^{*}(x_i,u_i)$ and $F(\theta_i -\epsilon) <w_i\leq F(\theta_i)$ for all $\epsilon$, consequently the inverse of the data generating equation \eqref{eq:DGE} is
\begin{equation}
Q_{\bx}(\bu,\bw) =
\{F\, :\ \, F(G_i^*(x_i-1,u_i))< w_i\leq F(G_i^*(x_i,u_i)), i=1,\ldots,n\}.  \label{eq:constraint}
\end{equation}

\begin{remark}
\yifan{Note that given $\bx$, $\bu$ and $\bw$, $Q_{\bx}(\bu,\bw)$ is a set of CDFs.}
\end{remark}

By Lemma~\ref{lemma:equiv2} in Section~\ref{thm:0} of the Appendix,
$Q_{\bx}(\bu,\bw)\neq\emptyset$ if and only if $\bu, \bw$ satisfy:
\begin{equation}\label{eq:constraint2}
\mbox{whenever $G_i^*(x_i,u_i)\leq G_j^*(x_j-1,u_j)$ then $w_i < w_j$.}
\end{equation}

A \yifan{GFD} is obtained by inverting the data generating equation, and \cite{hannig2016generalized} proposed a general definition of GFD. However, in order to simplify the presentation, we use an earlier, less general version in \cite{Hannig2009}. These two definitions are equivalent for the models considered here.
Suppose $(\bU^\star,\bW^\star)$ are uniformly distributed on the set $\{(\bu^\star,\bw^\star): Q_{\bx}(\bu^\star,\bw^\star)\neq\emptyset\}.$ A GFD is then the distribution of any element of the random set 
$
 \overline{Q_{\bx}(\bU^\star,\bW^\star)},
$ 
where the closure is in the weak topology on the space of probability measures on $\mathcal S$ and the element is selected so that it is measurable, i.e., a random distribution function on $\mathcal S$.
Given observed data $(x_1,\ldots,x_n)$, we define random functions $F^U$ and $F^L$ as follows: for each $\theta\in\mathcal S$ and $(\bU^\star,\bW^\star)$,
\[
F^U(\theta) \equiv \min\{W^\star_i, \mbox{ for all $i$ such that } \theta<G_i^*(x_i-1, U^\star_i)\},
\]
and
\[
F^L(\theta) \equiv \max\{W^\star_i, \mbox{ for all $i$ such that } \theta\geq G_i^*(x_i, U^\star_i)\},
\]
where $\min\emptyset =1$ and $\max\emptyset =0$. These functions are clearly non-decreasing and right continuous. 
Note that if $Q_{\bx}(\bU^\star,\bW^\star)\neq\emptyset$, Portmanteau's theorem~\citep{Billingsley1999} and  \eqref{eq:constraint} imply that a distribution function $F\in \overline{Q_{\bx}(\bU^\star,\bW^\star)}$ if and only if  $F^L(\theta)\leq F(\theta)\leq F^U(\theta)$ for all $\theta\in\mathcal S$.
Thus the functions $F^U$ and $F^L$  will be called the upper and lower fiducial bounds throughout.
\yifan{A sample from $F^U$ and $F^L$ can be used to perform estimation and inference for the unknown distribution function $F(\theta)$ in the same way that posterior samples are used in the Bayesian context.}
We generate realizations of $F^U$ and $F^L$ by a novel Gibbs sampler in the next section.

\subsection{Gibbs Sampling and GFD based inference}
We need to generate $(\bU^\star,\bW^\star)$ from the standard uniform distribution 
on a set described by Equation~\eqref{eq:constraint2}, which is achieved by using a Gibbs sampler.
For each fixed $i$, denote random vectors with the $i$-th observation removed by $(\bU^\star_{[-i]},\bW^\star_{[-i]})$. If  $(\bU^\star,\bW^\star)$ satisfy the constraint \eqref{eq:constraint2}, so do $(\bU^\star_{[-i]},\bW^\star_{[-i]})$. The proposed Gibbs sampler is based on the conditional distribution of
\begin{equation}\label{eq:Gibbs}
 (U^\star_i,W^\star_i) \mid \bU^\star_{[-i]},\bW^\star_{[-i]},
 \end{equation}
which is a bivariate uniform distribution on a set $A$, where $A$ is a disjoint \mbox{union} of small rectangles. The beginnings and ends of the rectangles' bases are the neighboring points in the set $\bigcup_{j\neq i}\{G_i^*(x_i,U^\star_j), G_i^*(x_i-1,U^\star_j)\}$, while the corresponding location and height on the vertical axis is determined by \eqref{eq:constraint2}. Details are described in Algorithm~\ref{alg:gibbs} and 
a visualization of the rectangles is shown in Figure~\ref{fig:4}.
Each marginal conditional distribution is supported on the entire $\mathcal S$ and therefore we expect the proposed Gibbs sampler to mix well.

\begin{figure}[H]
\begin{center}
\includegraphics[width=6cm]{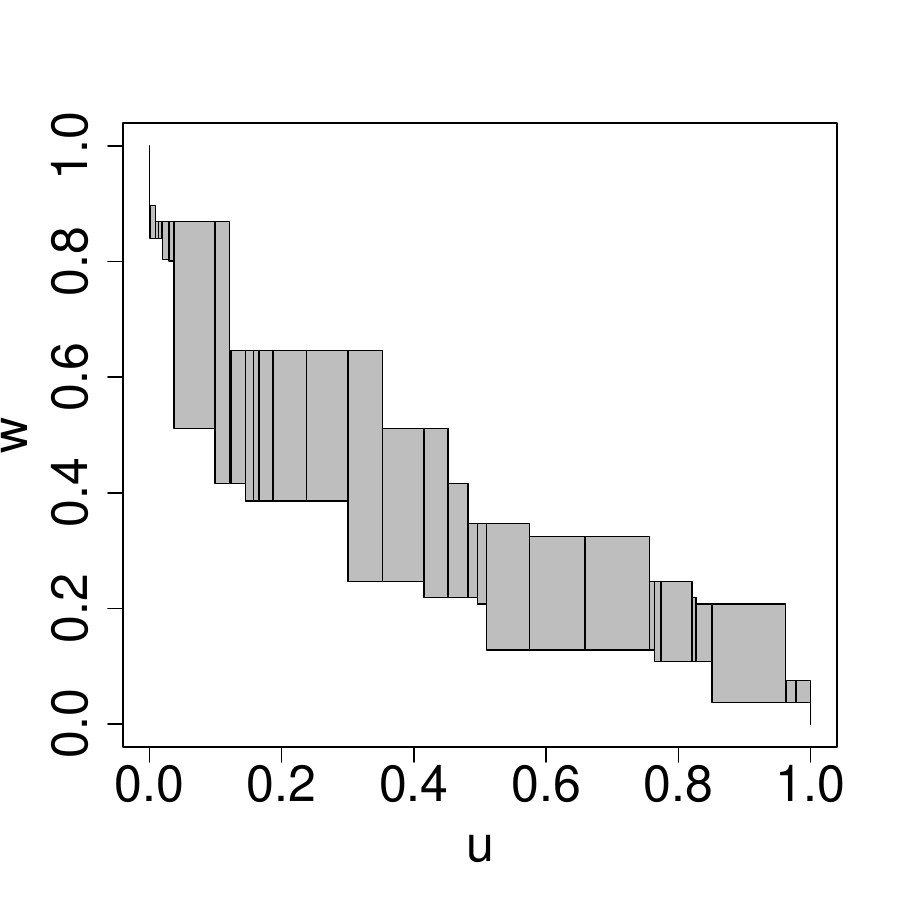}
\includegraphics[width=6cm]{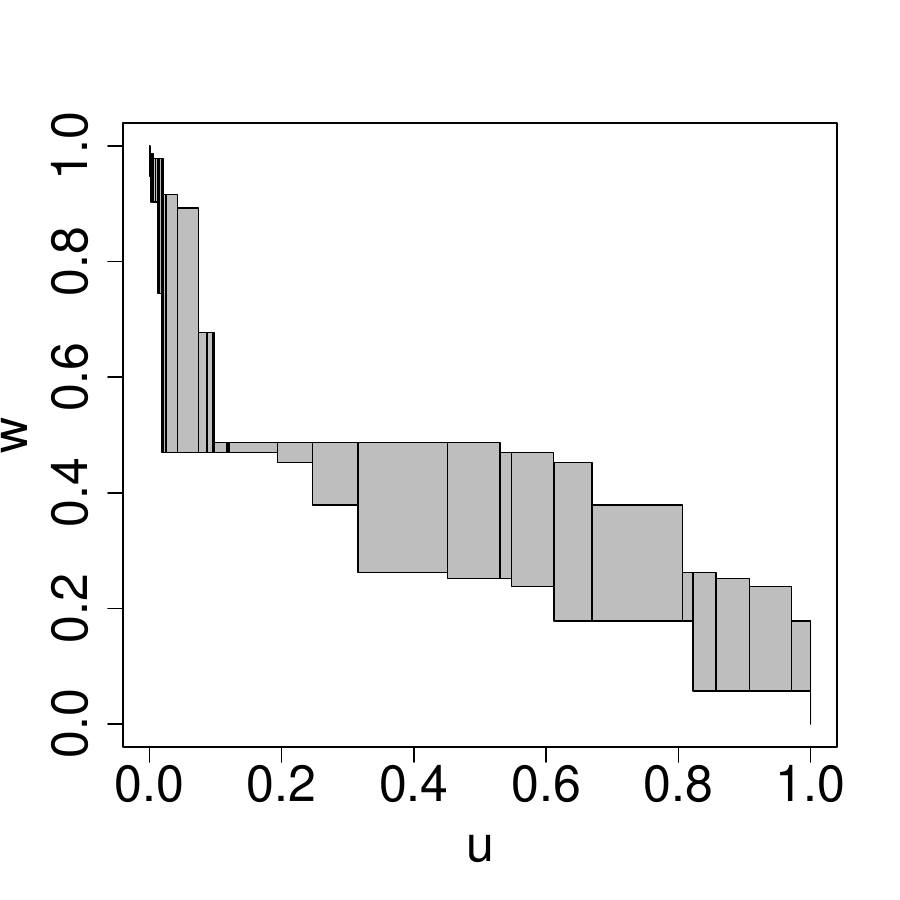}
\end{center}
\caption{\yifan{A visualization of rectangles in Algorithm~1 for the final step of the Gibbs sampler ($j=n_{burn}+n_{mcmc}, i=n$) for the toy example described in Section~\ref{sec:2.3}. The last $(U_n,W_n)$ was generated from the gray area, with the rectangles indicated by black lines. Each panel represents a different starting value, respectively. \label{fig:4}}}
\end{figure}

The proposed Gibbs sampler \yifan{needs} starting points,
and we consider \mbox{two} potential initializations.
The first one starts with randomly generated $(\bU^\star,\bW^\star)$ from independent $\tunif(0,1)$ and reorders $\bW^\star$ so that the constraint~\eqref{eq:constraint2} is satisfied.
The second starting value is consistent with having a deterministic $\Theta$, e.g., $p=\frac{\sum_{i=1}^n x_i}{\sum_{i=1}^n m_i}$ for binomial data $X_i \sim \tbin(m_i,p)$ and $\lambda=\frac{\sum_{i=1}^n x_i}{n}$ for Poisson data $X_i \sim \tpoi(\lambda)$. As these two starting points are very different, they can be used to monitor convergence. To streamline our presentation, in Section~\ref{sec:simulation} and \ref{sec:realdata}, we present numerical results using the first initialization.

\begin{algorithm} 
\SetAlgoLined
\caption{Pseudo algorithm for the fiducial Gibbs sampler} \label{alg:gibbs}
\KwIn{Dataset, e.g., $(m_i,x_i)$, $i=1,\ldots, n$ for binomial data,\\
      ~~~~~~~~~~ starting vectors $\bu, \bw$ of length $n$,\\
      ~~~~~~~~~~ $n_{\text{mcmc}}$ , $n_{\text{burn}}$, and vector $\btheta_{\text{grid}}$ of length $n_{\text{grid}}$.}
\For{$i = 1$ \textbf{to} $n$}{
$\btheta_{L}[i]=G_i^{*}(x_i-1,\bu[i]), \btheta_{U}[i]=G_i^{*}(x_i,\bu[i])$\;
}
\ShowLn Run Gibbs Sampler using the initial values $\bu,\bw,\btheta_L, \btheta_U$\;
\For{$j = 1$ \textbf{to} $n_{\text{burn}}+n_{\text{mcmc}}$}
{
\For{$i = 1$ \textbf{to} $n$}
{
\ShowLn $\bu^0=\bu[-i],\bw^0=\bw[-i],\btheta^0_{L}=\btheta_{L}[-i], \btheta^0_{U}=\btheta_{U}[-i]$\;
\ShowLn $\bu^{\text{pre}}_{L}=G_i(x_i,\btheta^0_{L})$, $\bu^{\text{pre}}_{U}=G_i(x_i-1,\btheta^0_{U})$\;
\ShowLn Sort $\bu^{\text{pre}}=(\bu^{\text{pre}}_{L}, \bu^{\text{pre}}_{U},0,1) $, denoted as $\bu^{\text{sort}}$\;
\ShowLn Sort $(\bw^0,1(n-1),1,0)$ according to the order of $\bu^{\text{pre}}$ as $\bw^\star_U$,\\~~where $1(n-1)$ is a vector with elements 1 of length $n-1$\;
\ShowLn Sort $(0(n-1),\bw^0,1,0)$ according to the order of $\bu^{\text{pre}}$ as $\bw^\star_L$,\\~~where $0(n-1)$ is a vector with elements 0 of length $n-1$\;
\ShowLn $\bw^{\text{pre}}_{U}=\text{cummin}(\bw^\star_U)$, $\bw^{\text{pre}}_{L}=\text{rev}(\text{cummax}(\text{rev}(\bw^\star_L)))$\;
\ShowLn Take the component-wise difference of $\bu^{\text{sort}}$, denoted as $\bu^{\text{diff}}$\;
\For{$k = 1$ \textbf{to} $2n-1$}{
\ShowLn $\bw^{\text{diff}}[k]= \bw^{\text{pre}}_{U}[k]- \bw^{\text{pre}}_{L}[k+1]$\;
}
\ShowLn Sample $i^\star \in \{1,\ldots,2n-1\}$ with probability $\propto \bu^{\text{diff}}\cdot \bw^{\text{diff}}$\;
\ShowLn Sample $a$ and $b$ from independent \tunif(0,1), and set \\~~~$u=\bu^{\text{sort}}[i^\star]+\bu^{\text{diff}}[i^\star]\cdot a,
        w=\bw^{\text{pre}}_U[i^\star]-\bw^{\text{diff}}[i^\star]\cdot b$,\\
~~~$\theta_{L}=G_i^{*}(x_i-1,u), \theta_{U}=G_i^{*}(x_i,u)$,\\
~~~$\bu[i]=u,\bw[i]=w,\btheta_{L}[i]=\theta_{L}, \btheta_{U}[i]=\theta_{U}$\;
}
Generate $n$ i.i.d. \tunif(0,1) and sort them according to the order of $\bw$, denoted by $\bw^*$. Replace $\bw$ by $\bw^*$, i.e., $\bw=\bw^*$\;
}
\ShowLn Evaluate the upper and lower bounds on a grid of values $\btheta_{\text{grid}}$ for each MCMC sample after burn-in, indexed by $l$\;
\For{$j = 1$ \textbf{to} $n_{\text{grid}}$}{
$F_l^{L}(\btheta_{\text{grid}}[j]) = \max( \bw[\btheta_{U}\leq \btheta_{\text{grid}}[j] ])$\;
$F_l^{U}(\btheta_{\text{grid}}[j])=  \min( \bw[\btheta_{L}\geq \btheta_{\text{grid}}[j] ])$\;
}
\Return  The fiducial samples $F_l^{U}$, $F_l^{L}$ evaluated on $\btheta_{\text{grid}}$.
\label{algorithm}
\end{algorithm}

From Algorithm~\ref{alg:gibbs}, we output two distribution functions that are needed for the proposed mixture and conservative confidence intervals.
In the rest of this paper, we denote Monte Carlo realizations of the lower and upper fiducial bounds by $F_l^L$ and $F_l^U$, respectively, where $l=1,\ldots, n_{\text{mcmc}}$, and $n_{\text{mcmc}}$ is the number of fiducial samples.

We propose to use the median of the $2n_{\text{mcmc}}$ samples $\{F_l^L(\theta), F_l^U(\theta), l=1,\ldots, n_{\text{mcmc}}\}$ as a point estimator of distribution function $F(\theta)$.
We construct two types of pointwise confidence intervals,  conservative and mixture, using appropriate quantiles of fiducial samples \citep{Hannig2009}.
In particular, the $95\%$ conservative confidence interval is formed by taking the empirical 0.025 quantile of $\{F_l^{L}(\theta),l=1,\ldots, n_{\text{mcmc}}\}$ as the lower limit and the empirical 0.975 quantile of $\{F_l^{U}(\theta),l=1,\ldots, n_{\text{mcmc}}\}$ as the upper limit.
The lower and upper limits of $95\%$ mixture confidence interval are formed by taking the empirical 0.025 and 0.975 quantiles of $\{F_l^L(\theta), F_l^U(\theta), l=1,\ldots, n_{\text{mcmc}}\}$, respectively.


\subsection{Further illustration with a simulated dataset}\label{sec:2.3}
To streamline our presentation, we take the binomial case as our running example hereinafter, i.e., the observed data are $(m_i,x_i),$ and $X_i \sim \tbin(m_i,P_i)$, $i=1,\ldots, n$, where $P_i\in\mathcal S= [0,1]$ plays the role of $\Theta_i$. We also provide the details of the proposed approach and some examples for the Poisson data in the Supplementary Material.

 We present a toy example to demonstrate the proposed fiducial approach. Suppose $F$ follows the Beta distribution $\tbeta(5,5)$. The number of trials $m_i=20$, $i=1,\ldots,n$. The sample size of the simulated binomial data is $n=20$. The fiducial estimates were based on 10000 iterations after 1000 burn-in times.

Figure~\ref{fig:1} presents the last MCMC sample of the lower fiducial bound $F_l^{L}(p)$ (blue line) and upper fiducial bound $F_l^{U}(p)$ (red line) for the two starting points, respectively.
As the fiducial distribution reflects the uncertainty, we do not expect every single fiducial curve to be close to the true CDF (black line).
Furthermore, Figure~\ref{fig:2} presents the mixture (blue line for lower limit; red line for upper limit) and conservative (cyan line for lower limit; magenta line for upper limit) confidence intervals (CIs) with two starting points, respectively computed from the MCMC sample. In addition, we plot the point estimates of the proposed approach along with Efron's $g$-modeling. The brown curve is the fiducial point estimate $\widehat F(p)$. The dashed curve is the point estimate of $F(p)$ for Efron's $g$-modeling without bias correction. Efron's confidence interval with bias correction looks almost the same as without correction thus we omit in the figures.
As can be seen, the proposed fiducial point estimator and confidence intervals capture the shape of the true CDF pretty well.


\begin{figure}[H]
\begin{center}
\includegraphics[width=6.5cm]{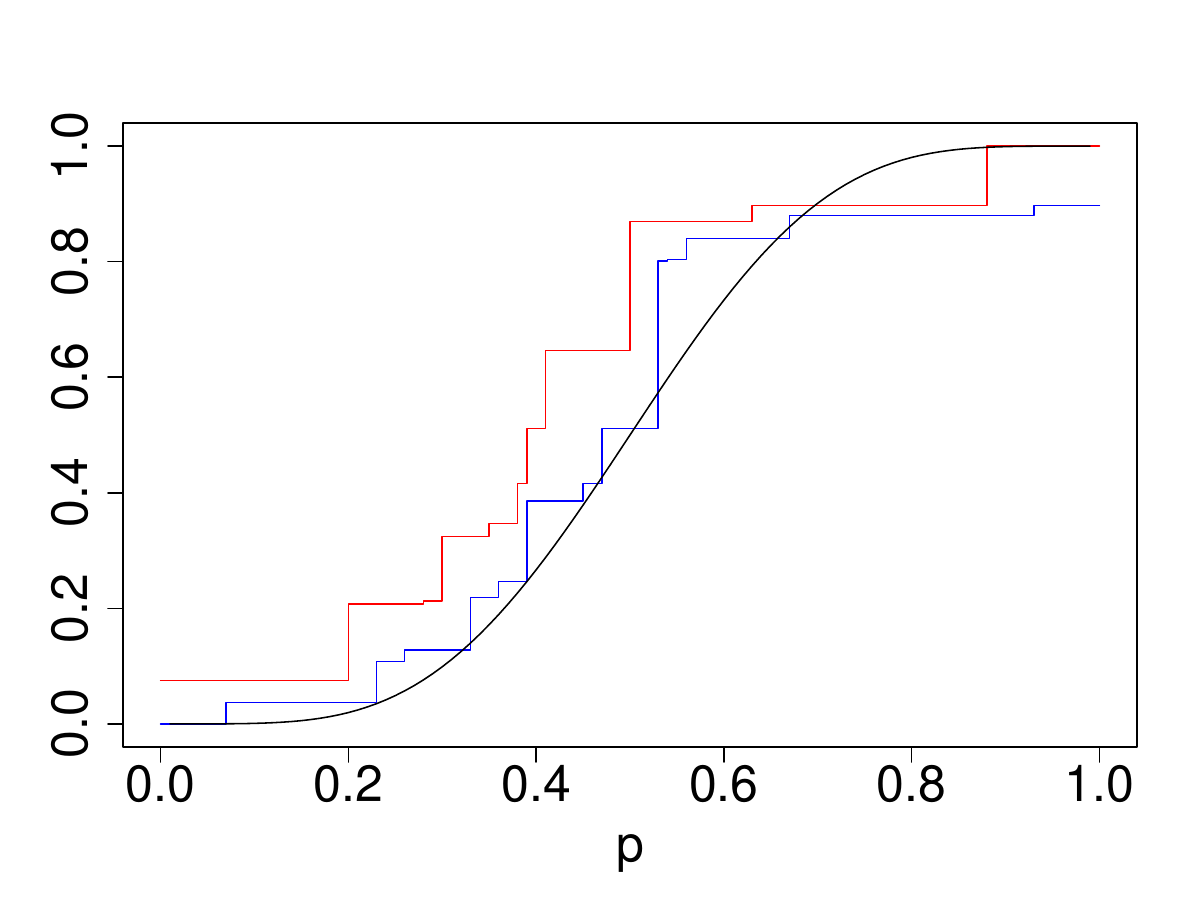}
\includegraphics[width=6.5cm]{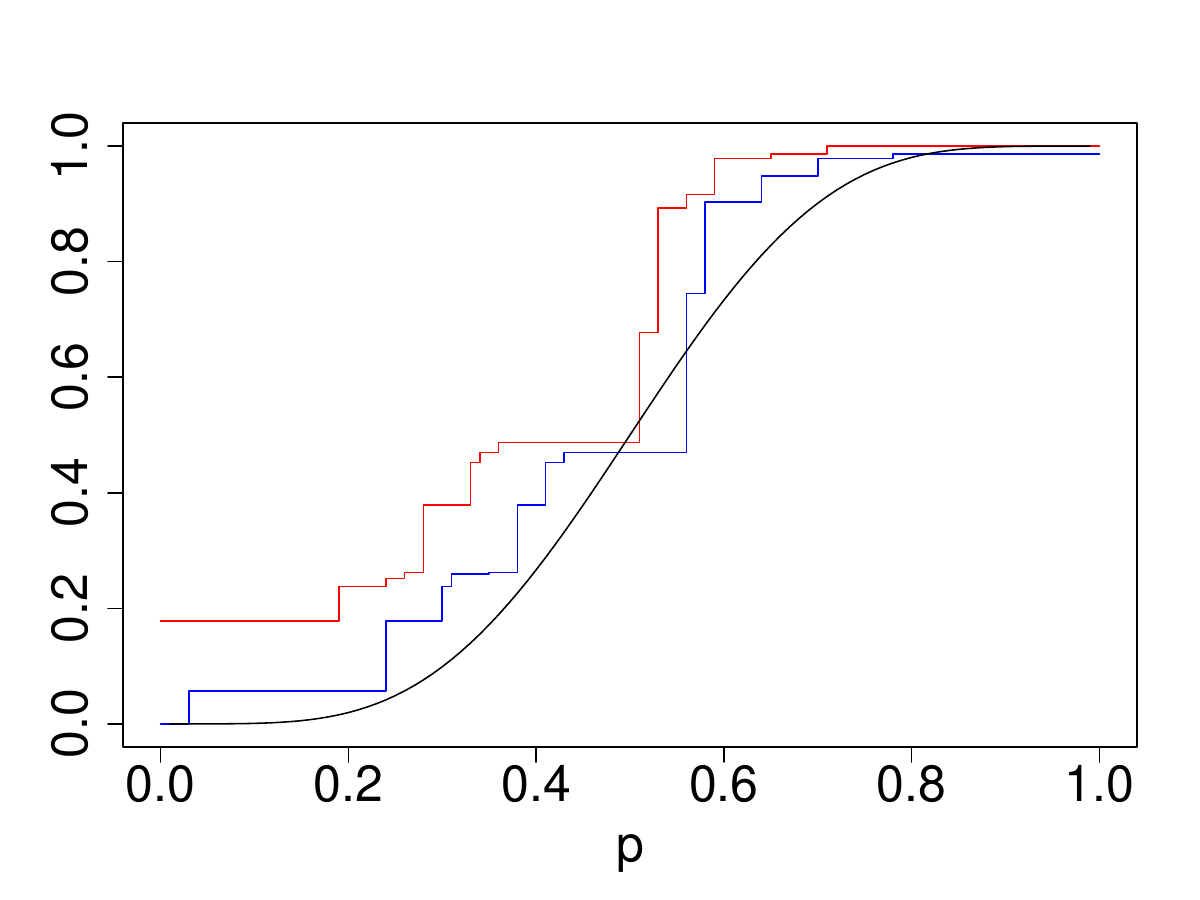}
\label{fig:3}
\end{center}
\caption{The last MCMC sample from \yifan{GFD}. The blue curve is a realization of the lower fiducial bound $F^{L}(p)$ and the red curve is a realization of the upper fiducial bound $F^{U}(p)$. The black curve is the true $F(p)$. Each panel represents a different starting value.}
\label{fig:1}
\end{figure}

\begin{figure}[H]
\begin{center}
\includegraphics[width=6.5cm]{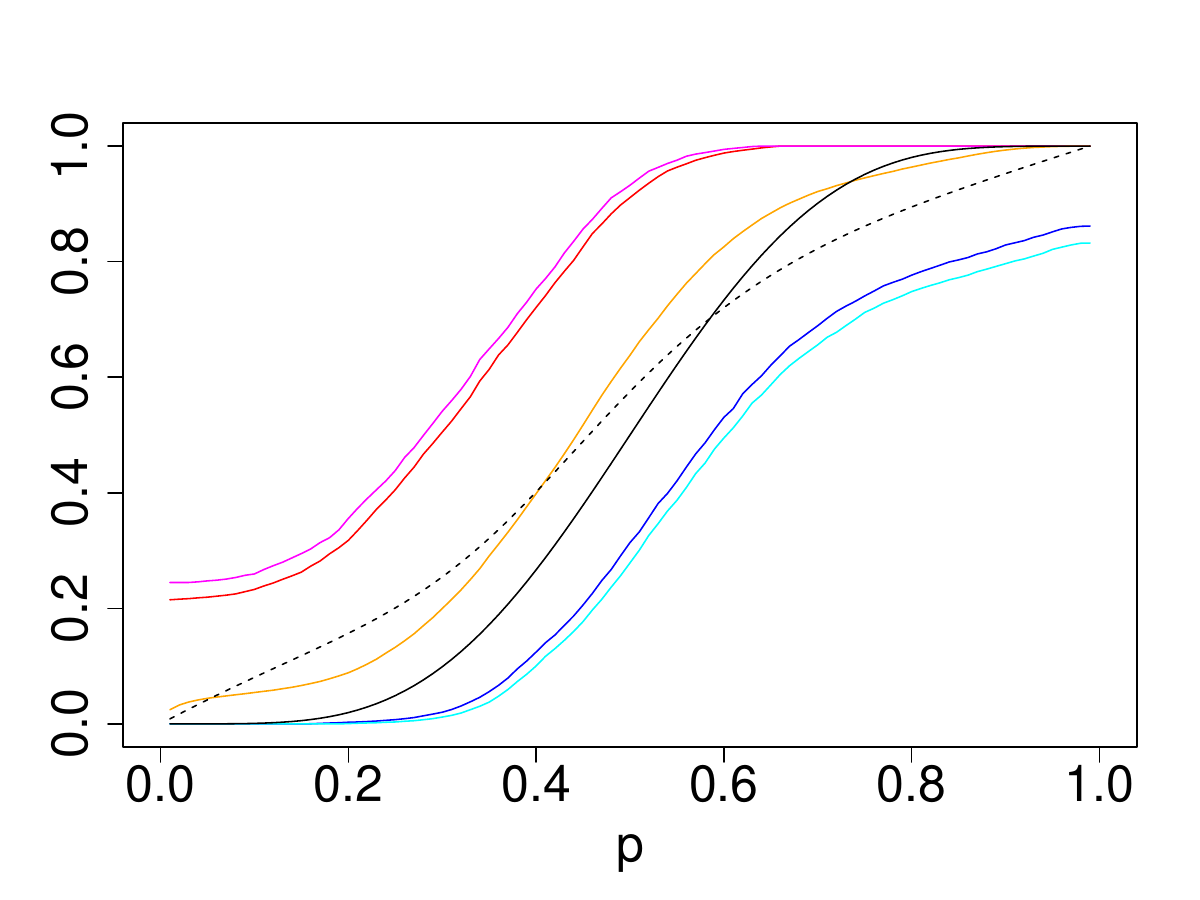}
\includegraphics[width=6.5cm]{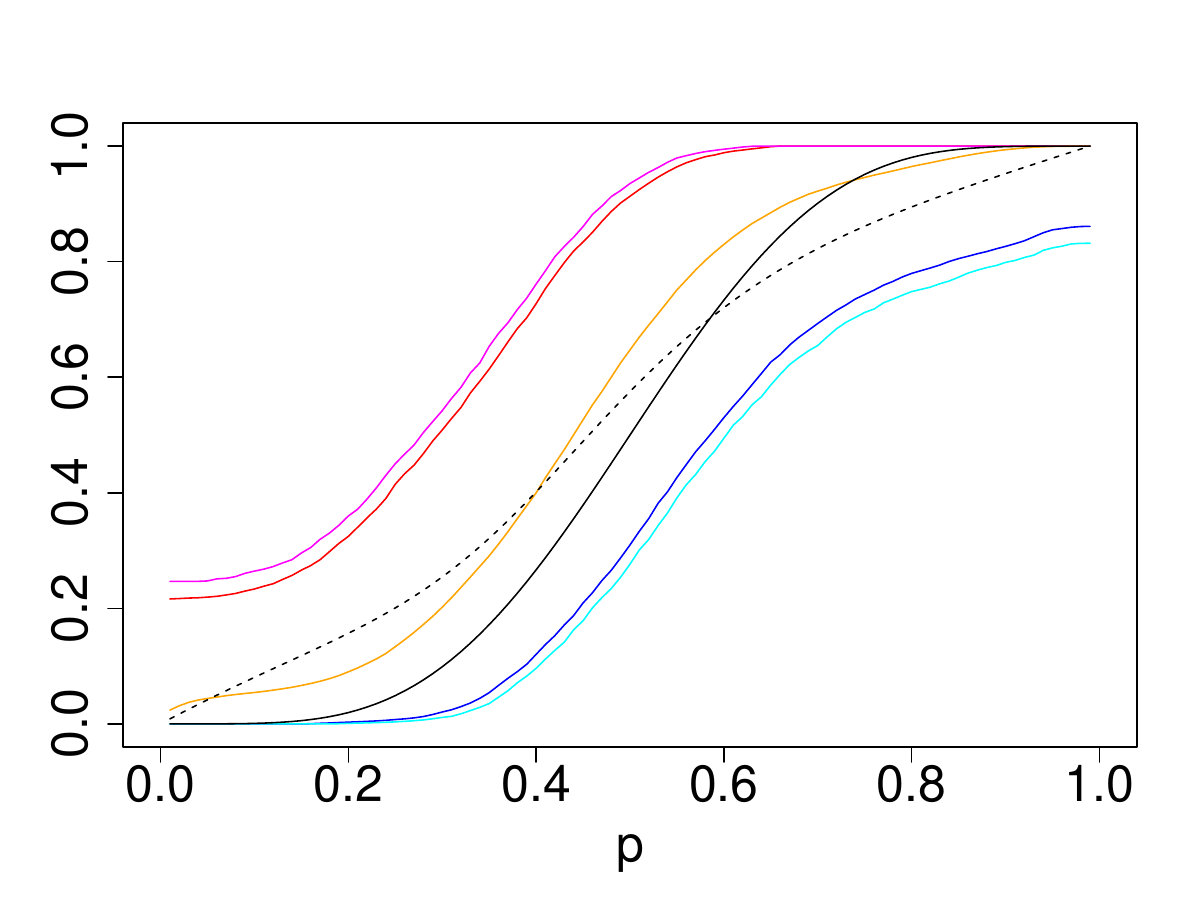}
\end{center}
\caption{Point estimates and 95\% CIs for $F(p)$ given a fixed simulated dataset. Each panel represents an interval computed  from a realization of MCMC chain initiated with different starting values. The orange curve is the fiducial point estimate $\widehat F(p)$.
The dashed curve is the point estimate of $F(p)$ for Efron's $g$-modeling. The black curve is the true $F(p)$. The blue and red curves are lower and upper limits of mixture CIs, respectively.
The cyan and magenta curves are lower and upper limits of conservative CIs, respectively.}
\label{fig:2}
\end{figure}

\section{Theoretical results}\label{sec:theory}
Recall that the GFD is a data-dependent distribution which is defined for every fixed
dataset $\bx$. It can be made into a random measure in the same way as one defines the
usual conditional distribution, i.e., by plugging random variables $\bX$ into the observed dataset.
In this section, we study the asymptotic behavior of this random measure for binomial distribution $X_i \sim \tbin(m_i,P_i)$ when the rate of $m_i$ is much faster than $n$, i.e., the following assumption holds.
\begin{assumption}
$\lim_{n \rightarrow \infty} n^4(\log n)^{1+\epsilon}/ (\min_{i=1,\ldots,n} m_i) = 0$ for any $\epsilon>0$.
\label{as:2}
\end{assumption}

We provide a central limit theorem for $F^L(p)$. The proof of Theorem~\ref{main} is deferred to Section~\ref{thm:3.1} in the Appendix. A similar result holds for $F^U(p)$.
\begin{theorem}\label{main}
Suppose \yifan{the true CDF $F$} is absolutely continuous with a bounded density.
Based on Assumption~\ref{as:2},
\begin{equation}
n^{1/2}\{ F^L(\cdot)-\widehat F_n(\cdot)  \} \rightarrow B_F(\cdot),
\label{eq:gaussian}
\end{equation}
 in distribution on Skorokhod space $\mathcal D[0,1]$ in probability, where 
 \begin{equation}\label{eq:OracleECDF}
\widehat F_n(s) \equiv \frac{1}{n}\sum_{i=1}^n I[P_i\leq s]
\end{equation}
is the oracle empirical CDF constructed based on unobserved $P_i$ that were used to generate the observed $X_i,\ i=1,\ldots,n$, and $B_F(\cdot)$ is a mean zero Gaussian process with covariance $\operatorname{cov}(B_F(s),B_F(t))=F(t\wedge s)-F(t)F(s)$.
\end{theorem}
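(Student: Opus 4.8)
The plan is to show that, under Assumption~\ref{as:2}, the enormous number of trials $m_i$ forces the fiducial construction to collapse onto the order statistics of a fresh sample of uniforms, so that $F^L$ becomes the uniform empirical quantile function time-changed by the oracle CDF $\widehat F_n$; the limit is then read off from the classical invariance principle for the uniform quantile process. Throughout, the convergence is conditional on the data, and ``in distribution in probability'' is understood in the bounded-Lipschitz metric on laws over $\mathcal D[0,1]$: I will exhibit a data event of probability tending to one on which the conditional (fiducial) law of the process equals that of a composed quantile process, and then let $n\to\infty$.

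First I would analyze the pseudo-parameters $a_i \equiv G_i^*(x_i, U_i^\star)$ and $b_i \equiv G_i^*(x_i-1, U_i^\star)$, which by the Remark are the $(1-U_i^\star)$ quantiles of $\tbeta(x_i+1, m_i-x_i)$ and $\tbeta(x_i, m_i-x_i+1)$. Since $X_i \sim \tbin(m_i,P_i)$, both $x_i/m_i$ and these Beta quantiles concentrate around $P_i$, and a uniform (in $i$) Beta-tail bound gives $\max_i(|a_i-P_i|\vee|b_i-P_i|) = O_p(\sqrt{\log n/\min_i m_i})$, with interval widths $a_i-b_i = O(1/\min_i m_i)$. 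Because the $P_i$ are i.i.d. from a CDF with bounded density, their minimum spacing is of order $n^{-2}$, and Assumption~\ref{as:2} is calibrated precisely so that the fluctuation $\sqrt{\log n/\min_i m_i}$ is of smaller order than $n^{-2}$. Hence, on a data event of probability $\to 1$, the tiny intervals $(b_i,a_i]$ are disjoint and ordered exactly as the $P_i$, so that constraint~\eqref{eq:constraint2} reduces to ``$P_i<P_j \Rightarrow W_i^\star < W_j^\star$''. Conditionally on the data and on this event, the $U_i^\star$ are essentially unconstrained and $(W_i^\star)$ is uniform on the simplex determined by the ranks of the $P_i$; that is, $W_i^\star = V_{(R_i)}$, where $V_{(1)}<\cdots<V_{(n)}$ are the order statistics of $n$ i.i.d.\ $\tunif(0,1)$ variables (the fiducial randomness) and $R_i$ is the rank of $P_i$. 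This reduction is the crux of the argument and the step I expect to be hardest: one must control the fiducial law of the $U_i^\star$ well enough to rule out reorderings uniformly over all $\binom n2$ pairs, which is exactly where the $n^4(\log n)^{1+\epsilon}$ rate is consumed.

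Granting the reduction, identifying $F^L$ is immediate: since $a_i\le\theta \iff P_i\le\theta$ on the good event and the $V_{(R_i)}$ increase with the rank,
\[
F^L(\theta) = \max\{W_i^\star : a_i\le\theta\} = V_{(\,n\widehat F_n(\theta)\,)},
\]
the uniform empirical quantile function evaluated at the data-measurable time $\widehat F_n(\theta)$. I would then invoke the classical weak convergence of the uniform quantile process, $s\mapsto n^{1/2}\{V_{(\lceil ns\rceil)}-s\} \Rightarrow \mathbb B(s)$ in $\mathcal D[0,1]$, where $\mathbb B$ is a Brownian bridge; here the $V$'s are independent of the data given the ranks, so this is a statement about the fiducial randomness alone. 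Composing with the time change and using Glivenko--Cantelli, $\widehat F_n\to F$ uniformly, the continuity of the limiting bridge lets the continuous-mapping/composition theorem deliver
\[
n^{1/2}\{F^L(\cdot)-\widehat F_n(\cdot)\} = n^{1/2}\{V_{(n\widehat F_n(\cdot))}-\widehat F_n(\cdot)\} \Rightarrow \mathbb B(F(\cdot)) \equiv B_F(\cdot).
\]
Because $F$ is non-decreasing, $F(s)\wedge F(t)=F(s\wedge t)$, so $B_F$ has covariance $F(s\wedge t)-F(s)F(t)$, matching the claim.

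Finally I would assemble the conditional statement: the good data event has probability $\to1$, on it the conditional law of $n^{1/2}\{F^L-\widehat F_n\}$ coincides with that of the composed quantile process, whose law converges to that of $B_F$; together with $\widehat F_n\to F$ in probability this yields convergence in distribution in probability in the sense above. Two technical points deserve care and would be handled separately: the behavior of the uniform quantile process near the endpoints $0$ and $1$ (so that convergence holds on all of $\mathcal D[0,1]$, not merely on $[\delta,1-\delta]$), and the fact that matching the index sets $\{a_i\le\theta\}$ and $\{P_i\le\theta\}$ can fail on a negligible set of $\theta$, which is absorbed by working in the Skorokhod rather than the uniform topology.
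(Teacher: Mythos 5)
Your skeleton is the same as the paper's: collapse $F^L$, on a high-probability event where the Beta-quantile intervals $(b_i,a_i]$ are disjoint and ordered like the $P_i$, onto uniform order statistics indexed by counts (the paper's oracle fiducial process $\widetilde F$), and then read off the limit of that composed quantile process — the paper cites Corollary~\ref{corollary} (a weighted-bootstrap/no-censoring fiducial BvM giving $\{1-F\}B\bigl(F/(1-F)\bigr)$, which is a Brownian bridge at time $F$), where you invoke the classical uniform quantile invariance principle plus composition. That part of your plan is fine and equivalent. The genuine gap is at the step you yourself call the crux, and it is not merely "hard": as sketched it would fail. You treat the $U_i^\star$ as if they were independent uniforms ("essentially unconstrained") and propose Beta tail bounds plus a union bound over $\binom{n}{2}$ pairs. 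But the fiducial law is the \emph{uniform distribution on the constraint set} $\{(\bu^\star,\bw^\star):Q_{\bx}(\bu^\star,\bw^\star)\neq\emptyset\}$, and this conditioning reweights the marginal law of $\bU^\star$: for fixed $\bu^\star$, the Lebesgue measure of compatible $\bw^\star$ is proportional to the number of $W$-orderings consistent with \eqref{eq:constraint2}, which equals one ordering when all intervals are disjoint but grows like $(k+1)!$ when $k+1$ intervals overlap. Overlapping configurations are therefore systematically over-weighted relative to the independent-uniform measure, so pairwise independent-uniform tail bounds do not control the fiducial probability of overlap. The paper's Lemma~\ref{lemma:nointersect} supplies exactly the missing accounting: it bounds $\sum_{k\geq 1}(k+1)!\,q_k$, where $q_k$ is the independent-uniform probability of a $k$-fold intersection, using sub-Gaussianity of the Beta endpoints together with the spacing bound of Lemma~\ref{lemma:nointersect0}; the factorials are beaten because the exponential term decays like $k^2$ times a power of $\log n$ while the combinatorial terms contribute only $c_1(k\log n+k\log k)$. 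Without this (or an equivalent) argument covering all multiplicities $k$, your claim that the good event has fiducial probability tending to one is unsupported.

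A second, more minor, inaccuracy: even on the good event you do not get the exact identity $F^L(\theta)=V_{(n\widehat F_n(\theta))}$, because $F^L$ jumps at the points $a_i$ while $\widehat F_n$ jumps at the $P_i$, and the counts $\#\{i:a_i\leq\theta\}$ and $\#\{i:P_i\leq\theta\}$ can disagree (by one, under separation). The fix is not the Skorokhod topology, since the discrepancy is in the \emph{values} of the process on short intervals: it is at most one uniform spacing, and after the $n^{1/2}$ scaling one needs $\max_i\{W^*_{(i+1)}-W^*_{(i)}\}=o_p(n^{-1/2})$. This is precisely how the paper bridges $F^L$ to $\widetilde F$, via the bound $(n+1)(1-\epsilon/n^{1/2})^n\to 0$, and your argument needs the same estimate to conclude in sup-norm (which then gives convergence in $\mathcal D[0,1]$ a fortiori).
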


Notice that the stochastic process on the left-hand side of \eqref{eq:gaussian} is naturally in $\mathcal D[0,1]$, the space of functions on $[0,1]$ that are right continuous and have left limits. Distances on $\mathcal D[0,1]$ are measured using  Skorokhod's metric which makes it into a Polish space \citep{Billingsley1999}. To understand the mode of convergence used here, note that there are two sources of randomness present. One is from the fiducial distribution itself
that is derived from each fixed data set. The other is the usual randomness of the data. Thus \eqref{eq:gaussian} can be interpreted as 
\begin{equation}\label{eq:explanation123}
 \rho\left(n^{1/2}\{ F^L(\cdot)-\widehat F_n(\cdot)\},B_F(\cdot)\right) \overset{pr}{\to} 0,
\end{equation}
where $\rho$ is any metric metrizing weak convergence of probability measures on the Polish space $\mathcal D[0,1]$, e.g., L\'evy-Prokhorov or Dudley metric \citep{Shorack2017}. The distribution of $n^{1/2}\{ F^L(\cdot)-\widehat F_n(\cdot)\}$ in the argument of $\rho$ is the fiducial distribution, i.e., induced by the randomness of $(\bU^\star,\bW^\star)$ with the data $X_i$ and $P_i$ being fixed. Consequently, the left-hand side of \eqref{eq:explanation123} is a function of $X_i$ and $P_i$, and the convergence in probability is based on the distribution of the data.

Theorem~\ref{main} establishes a Bernstein-von Mises theorem for the fiducial distribution under Assumption~\ref{as:2}, which implies that confidence intervals described in Section~\ref{sec:method} have asymptotically correct coverage. Moreover, Theorem~\ref{main} provides a sufficient condition for $n^{1/2}$-estimability of binomial probability parameter's distribution function.
While Assumption~\ref{as:2} is pretty stringent, it does not seem
likely that one can establish a unified asymptotic property of the proposed fiducial approach under a general scheme.
This is best seen by the fact that \yifan{in the binomial case if} the $m_i$ are uniformly bounded, there is not enough information in the data to consistently estimate the underlying distribution function of $P$.

Interestingly, looking at the fiducial solution reveals an interesting connection to a different statistical problem \citep{cui2021}. Recall that the quantities $P_i,\ i=1,\ldots,n$ are only known to be inside random intervals $(G_i^{*}(x_i-1,U_i^\star), G_i^{*}(x_i,U_i^\star)]$. Therefore, the statistical problem we study here bears similarities to non-parametric estimation under Turnbull's general censoring scheme \citep{efron1967two,turnbull1976empirical}, in which case there is no unified theory of the nonparametric maximum likelihood estimator but properties are investigated under various special and challenging cases such as $n^{1/2}$ convergence for right-censored data \citep{breslow1974large}, and $n^{1/3}$ convergence for current status data \citep{groeneboom1992information}.

\begin{remark}\label{newremark}
Suppose $m_i\equiv m$. Notice that $$pr(F \in Q_x(U^*,W^*))\propto  \prod_{i=1}^{n} \left[\int_0^1  {{m}\choose{x_i}} p^{x_i} (1-p)^{m-x_i} dF(p) \right] $$ is proportional to the nonparametric likelihood function. 
\yifan{A simple calculation shows that maximizing the scaled fiducial probability in its limit provides exactly the underlying true CDF. Detailed derivations and further discussions of this observation are provided in Section
~\ref{sec:moments} of the Supplementary Material. 
}
\end{remark}

\section{Numerical experiments}\label{sec:simulation}

We perform simulation studies to compare the frequentist properties of the proposed fiducial confidence intervals with Efron's $g$-modeling \citep{10.1093/biomet/asv068,deconvolveR},  the nonparametric bootstrap, and a nonparametric Bayesian approach \citep{dirichletprocess}.
For each scenario, we first generated $p_i, i=1,\ldots, n$ from the distribution function $F$. Then we drew $X_i$ from the binomial distribution Bin($m_i,p_i$), where $m_i$ is described in Section~\ref{subsec:simulation}. The simulations were replicated 500 times for each scenario.

For the proposed method as well as other existing methods, we choose the grid $p=0.01, 0.02, \ldots, 0.99$ following \cite{narasimhan2016g}.
The fiducial estimates were based on 2000 iterations of the Gibbs sampler after 500 burn-in times.
Efron's $g$-modeling was implemented using R package \texttt{deconvolveR} \citep{deconvolveR}. We used default values for the degree of the splines, i.e., 5.
We considered the regularization strategy with
the default value $c_0=1$.
For the nonparametric bootstrap, we first obtained the maximum likelihood estimates $\hat p_i=x_i/m_i$. We then constructed the empirical CDF as the point estimator, and used $B=1000$ bootstrap samples of $\hat p_i$ to construct confidence intervals. We also considered a fully Bayesian approach, i.e., Dirichlet process mixture of Beta Binomial, which gives more flexibility than a Beta binomial model \cite[p19]{ross2018dirichletprocess}. Default values for the prior parameters in R package \texttt{dirichletprocess} \citep{dirichletprocess} were used. The Bayesian estimates were based on 2000 MCMC samples after 500 burn-in times.

\subsection{Simulation settings}\label{subsec:simulation}

We start with the following two scenarios with \yifan{$m_i$ being the same across individuals}.
\begin{scenario} \normalfont
We consider the same setting as Section~\ref{sec:2.3}.
Let \yifan{$\Theta$} follow Beta distribution $\tbeta(5,5)$, and the number of trials $m_i=20$, $i=1,\ldots,n$. The sample size $n$ of the simulated binomial data was set to 50.
\end{scenario}

\begin{scenario} \normalfont
Let \yifan{$\Theta$} follow a mixture of Beta distributions $0.5\tbeta(10,30)+0.5\tbeta (30,10)$, and the number of trials $m_i=20$, $i=1,\ldots,n$. The sample size $n$ of the simulated binomial data was set to 50.
\end{scenario}

Next, we consider three complex settings from \cite{10.2307/41724531}.

\begin{scenario} \normalfont(Beta density)
We let \yifan{$\Theta$} follow $\tbeta(8,8)$, and the $m_i$'s are integers sampled uniformly between 100 and 200. The sample size $n$ of the simulated binomial data equals to 100.
\end{scenario}

\begin{scenario} \normalfont (A multimodal distribution)
We consider a mixture of Beta distributions $0.5\tbeta(60,10)+0.5\tbeta(10,60)$ for \yifan{$\Theta$}. The sample size $n$ of the simulated binomial data equals to 100, and the $m_i$'s equal to 100, $i=1,\ldots,n$.
\end{scenario}

\begin{scenario} \normalfont (Truncated exponential)
Let \yifan{$\Theta$} follow $\texp(8)$ truncated at 1. The simulated binomial data are of size $n = 200$ and $m_i = 100$ for $i = 1, \ldots,n.$
\end{scenario}

We also consider the above five settings for $n= 1000$. The results are reported in the Supplementary Material.

\subsection{Numerical results}

In this section, we first compare the mean squared error (MSE) of different methods of $F(p)$ for the five scenarios.
The numerical results for $p = 0.15, 0.25, 0.5, 0.75, 0.85$ for each scenario are presented in Table~\ref{table:mse}.
We see that the \mbox{MSEs} of the proposed fiducial point estimates are as good as and sometimes smaller than competing methods.


\begin{table}[!h]
\centering
\scalebox{1.1}{
\begin{tabular}{ccccccc}
  \hline
 Scenario & $p$ & F & g & bc & BP & BA \\
  \hline
      \multirow{2}{*}{1} & 0.15 & 5 &  40  & 39  & 17  &  1\\
& 0.25 &  20 &  49 &  48 &  69  & 10\\
& 0.50 & 51 &  47  & 48 &  71 &  78\\
&0.75 & 18  & 22  & 21  & 16  & 10\\
& 0.85 & 5  & 13 &  12   & 4   & 1\\
   \hline
    \multirow{2}{*}{2} & 0.15 &  41 & 149 & 149 & 145 &   9  \\
&0.25 & 39 &  17 &  18  & 66 & 101\\
&0.50 & 49 &  32  & 33  & 54 &  53\\
&0.75 & 37 &  18 &  19  & 50 &  88\\
&0.95 & 46 &  86 &  84  & 33  & 12\\
   \hline
   \multirow{2}{*}{3}  & 0.15 &   0.14  & 8.45 & 8.16 &  0.15 &  0.04\\
& 0.25  & 2.57 & 14.89 & 14.41 & 2.87 & 1.39\\
& 0.50 & 22.85 &  26.94 & 27.04 &  24.68 &  21.92\\
&0.75  &  2.54 & 5.03 & 4.81 & 2.80 & 1.34\\
&0.85   & 0.16  & 1.46 & 1.38  & 0.15 &   0.05\\
   \hline
 \multirow{2}{*}{4}  & 0.15 &  22 &  49  & 49 &  23 &  27\\
& 0.25 & 27 &  34 &  33 &  26 &  25\\
&0.50 & 25  & 24  & 25  & 26  & 26\\
& 0.75 & 26 &  36  & 35  & 27  & 25\\
& 0.85 & 20 &  52 &  52 &  26  & 25\\
   \hline
  \multirow{2}{*}{5}  &   0.15  & 11  & 10  & 10 &  11 &  10* \\
     & 0.25 & 6 &  10 &  10  &  6   & 6* \\
     & 0.50 & 1   & 5  &  4  &  1   & 1* \\
     & 0.75 &  0.13 & 1.84 & 1.66 &  0.12 &  0.10* \\
     & 0.85 & 0.05  & 0.76 & 0.68 &  0.05  &  0.04* \\
   \hline
\end{tabular}
}
\caption{MSE ($\times 10^{-4}$) of point estimates for $F(p)$ of each scenario.
``F'' denotes the fiducial point estimates;
``g'' denotes Efron's $g$-modeling without bias correction;  ``bc'' denotes Efron's $g$-modeling with bias correction;
``BP'' denotes the bootstrap method; ``BA'' denotes the Bayesian method.
*The Bayesian results for Scenario 5 are reported based on 489 replications as 11 runs failed due to an error in R package \texttt{dirichletprocess}.}
\label{table:mse}
\end{table}

\begin{table}[!h]
\centering
\scalebox{1.1}{
\begin{tabular}{cccccccc}
  \hline
 Scenario & $p$ & M & C & g & bc & BP & BA \\
  \hline
      \multirow{2}{*}{1} &0.15 & 99 & 100  & 25  & 27 &  74 &  88\\
&0.25 & 99 & 100  & 80 &  82 &  65  & 83\\
&0.50 & 100 & 100  & 94 &  94 &  91  & 84\\
&0.75 & 100  & 100  & 96  & 96  & 92  & 83\\
&0.85 & 100 & 100 &  90 &  91  & 54  & 89\\
   \hline
    \multirow{2}{*}{2}  & 0.15 &  98  & 99  &  1  &  1 &  27  & 97\\
&0.25 & 100 & 100  & 98 &  98  & 90   & 86\\
&0.50& 98  & 98  & 96 &  96 &  97  & 45\\
&0.75 &  100 & 100 &  96 &  96  & 85  & 87\\
& 0.85 &  97  & 99  & 31 &  31 &  79  & 97\\
   \hline
 \multirow{2}{*}{3}   & 0.15&   100  & 100  &  4  &  5  & 13 &  28\\
& 0.25 &    99 &  99  & 69  & 71  & 86  & 28\\
& 0.50 & 99 &  99  & 89 &  89 &  96  & 27\\
&0.75 &  98  & 99 &  97 &  97  & 87 &  32\\
& 0.85 &  99 & 100  & 97  & 98  & 12  & 31\\
   \hline
 \multirow{2}{*}{4}   & 0.15&   99 & 100  & 48  & 49  & 95  & 59\\
& 0.25 &  95  & 97 &  89 &  90  & 93 &  12 \\
& 0.50 & 95 &  96  & 93  & 93&   95  &  6\\
&0.75 & 95 &  95  & 89  & 89 &  93  &  9\\
& 0.85 &   99 &  99  & 75  & 75   &91  & 59\\
   \hline
  \multirow{2}{*}{5}  &   0.15 &  99  & 99  & 95  & 95  & 93 &  18* \\
     & 0.25 & 97  & 98 &  95  & 95  & 93  & 21* \\
     & 0.50 & 98 &  98 &  98  & 98 &  89  & 20* \\
     & 0.75 & 98  & 99 & 100 & 100  & 36  & 20* \\
     & 0.85 & 99 & 100 & 100 & 100 &  17 &  11* \\
   \hline
\end{tabular}
}
\caption{Coverage (in percent) of 95\% CIs for $F(p)$ of each scenario. ``M'' denotes mixture GFD confidence intervals;  ``C'' denotes conservative GFD confidence intervals;
``g'' denotes Efron's $g$-modeling without bias correction;  ``bc'' denotes Efron's $g$-modeling with bias correction;
``BP'' denotes the bootstrap method; ``BA'' denotes the Bayesian method.
*The Bayesian results for Scenario 5 are reported based on 489 replications as 11 runs failed due to an error in R package \texttt{dirichletprocess}.}
\label{table:c}
\end{table}

\begin{table}[!h]
\centering
\scalebox{1.1}{
\begin{tabular}{cccccccc}
  \hline
 Scenario & $p$ & M & C & g & bc & BP & BA \\
  \hline
      \multirow{2}{*}{1} & 0.15 &  123 & 139 & 101 & 101 &  84 &  28\\
&  0.25 &  220 & 241 & 171 & 171 & 172 &  95\\
& 0.50 & 426 & 465 & 246  &246 & 272  & 279\\
&0.75 & 217 & 239 & 154  &154  &128 &  89\\
& 0.85 & 121 & 137 &  82 &  81 &  42 &  28\\
   \hline
    \multirow{2}{*}{2} &0.15 & 243 & 266  & 66&   66 & 185 & 101 \\
& 0.25 & 357 & 390 & 162 & 162 & 251 & 311\\
&0.50 & 305  & 330 & 206  &206  &275  & 90\\
& 0.75 & 352 & 385 & 166 & 166& 226 & 301\\
& 0.85 &245 & 268 & 143 & 143 & 133  & 107\\
   \hline
   \multirow{2}{*}{3}  & 0.15 &  35  & 42  & 46  & 46  &  4  &  2\\
& 0.25 &  77 &  84  & 81  & 81  & 51  & 10\\
& 0.50 & 243  & 259 & 168 & 168  & 195 &  33\\
& 0.75  & 78 &  86 &  65  & 65  & 51  & 10\\
 & 0.85 & 35 &  42  & 27  & 26  &  4   & 2\\
   \hline
 \multirow{2}{*}{4}  &0.15&  240 & 259 & 125 & 125 & 179 & 101\\
& 0.25 & 201 & 213 & 191 & 191  &195 &  16\\
&0.50&  193 & 202 & 188  &188 & 196  &  0.02\\
&0.75& 201 & 213 & 198 & 198  &195  & 16\\
&0.85& 240 & 259 & 178 & 178 & 173 &  98\\
   \hline
  \multirow{2}{*}{5}  &   0.15 & 160 & 171 & 125 & 125 & 126 &  18* \\
     & 0.25 & 115 & 124 & 114 & 114  & 94 &  13* \\
     & 0.50 & 45 &  50  & 74 &  73 &  35  &  6*\\
     & 0.75 &  20 &  24  & 37 &  36 &   7 &  1*\\
     & 0.85 & 17 &  20  & 23  & 22  &  3 &   1* \\
   \hline
\end{tabular}
}
\caption{Mean length ($\times 10^{-3}$) of 95\% CIs for $F(p)$ of each scenario.  ``M'' denotes mixture GFD confidence intervals;  ``C'' denotes conservative GFD confidence intervals;
``g'' denotes Efron's $g$-modeling without bias correction;  ``bc'' denotes Efron's $g$-modeling with bias correction;
``BP'' denotes the bootstrap method; ``BA'' denotes the Bayesian method.
*The Bayesian results for Scenario 5 are reported based on 489 replications as 11 runs failed due to an error in R package \texttt{dirichletprocess}.}
\label{table:l}
\end{table}

Next, we present the coverage and average length of confidence intervals for various methods in Tables~\ref{table:c}-\ref{table:l}.
Table~\ref{table:c} summarizes the coverage of 95\% confidence intervals of various methods, and Table~\ref{table:l} summarizes the average length of these confidence intervals. ``M'' denotes mixture GFD confidence intervals;  ``C'' denotes conservative GFD confidence intervals;
``g'' denotes Efron's $g$-modeling without bias correction;  ``bc'' denotes Efron's $g$-modeling with bias correction;
``BP'' denotes the bootstrap method; ``BA'' denotes the Bayesian method.

\yifan{
For point estimators, overall, the proposed fiducial method and the Bayesian method outperform other methods. The performance of the proposed estimator and the Bayesian estimator is comparable, with the former outperforming the latter for medium values of $p$ in Scenarios~1-2 and the latter outperforming the former for small and large values of $p$ in Scenarios~1-3.
For uncertainty quantification,} we see that the GFD confidence intervals maintain or exceed the nominal coverage everywhere,
while other methods often have coverage problems.
In particular, Efron's confidence intervals and nonparametric bootstrap have substantial coverage problems close to the boundary, while the Bayesian method consistently underestimates the uncertainty resulting in credible intervals that are too narrow.

\yifan{
It is not surprising that the GFD confidence intervals are often longer than other methods as the GFD approach aims to provide a conservative way to quantify uncertainty.
As expected, the mean length of mixture GFD confidence intervals is a little shorter than conservative GFD confidence intervals.
A potential reason for the fiducial approach outperforming Efron's $g$-modeling in terms of coverage is that Efron's $g$-modeling relies on an exponential family parametric model and the proposed fiducial approach is nonparametric. 
Therefore, the proposed method is demonstrably robust to certain model mis-specifications, e.g., when the true model does not belong to an exponential family.
}

\section{Intestinal surgery data}\label{sec:realdata}
In this section, we consider an intestinal surgery study on gastric adenocarcinoma involving $n = 844$ cancer patients \citep{gholami2015number}.
Resection of the primary tumor with appropriate dissection of surrounding lymph \mbox{nodes} is the foundation of curative care.
In addition to the primary tumor, surgeons also remove satellite nodes for later testing.
Efron's deconvolution was used to estimate the prior distribution of the probability of one satellite being malignant in this study \citep{gholami2015number}.

The dataset consists of pairs $(m_i,X_i), i=1,\ldots,n$, where $m_i$ is the number of satellites removed and $X_i$ is the number of these satellites found to be
malignant. The $m_i$ varies from 1 to 69. Among all cases, 322 have $X_i = 0$. For the rest of them, $X_i /m_i$ has an approximate $\tunif(0,1)$ \yifan{distribution} \citep{10.1093/biomet/asv068}. We are interested in estimating distribution function of the probability of one satellite being malignant.
Following the model proposed in \cite{10.1093/biomet/asv068}, we assume a binomial model, i.e., $X_i \sim \tbin(m_i,P_i)$, where $P_i$ is the $i$-th patient's probability of any one satellite being malignant.

We compared the proposed mixture and conservative GFD confidence intervals to Efron's with and without bias correction, the bootstrap method, and the fully Bayesian approach. For all methods, we used the grid $[0.01, 0.02,$ $\ldots, 0.99]$ for the discretization of $p$. The fiducial and Bayesian estimates were based on 10000 iterations after 1000 burn-in times.
Other tuning parameters for each method were chosen in the same way as Section~\ref{sec:simulation}.

\begin{figure}[!h]
\begin{center}
\includegraphics[width=11cm]{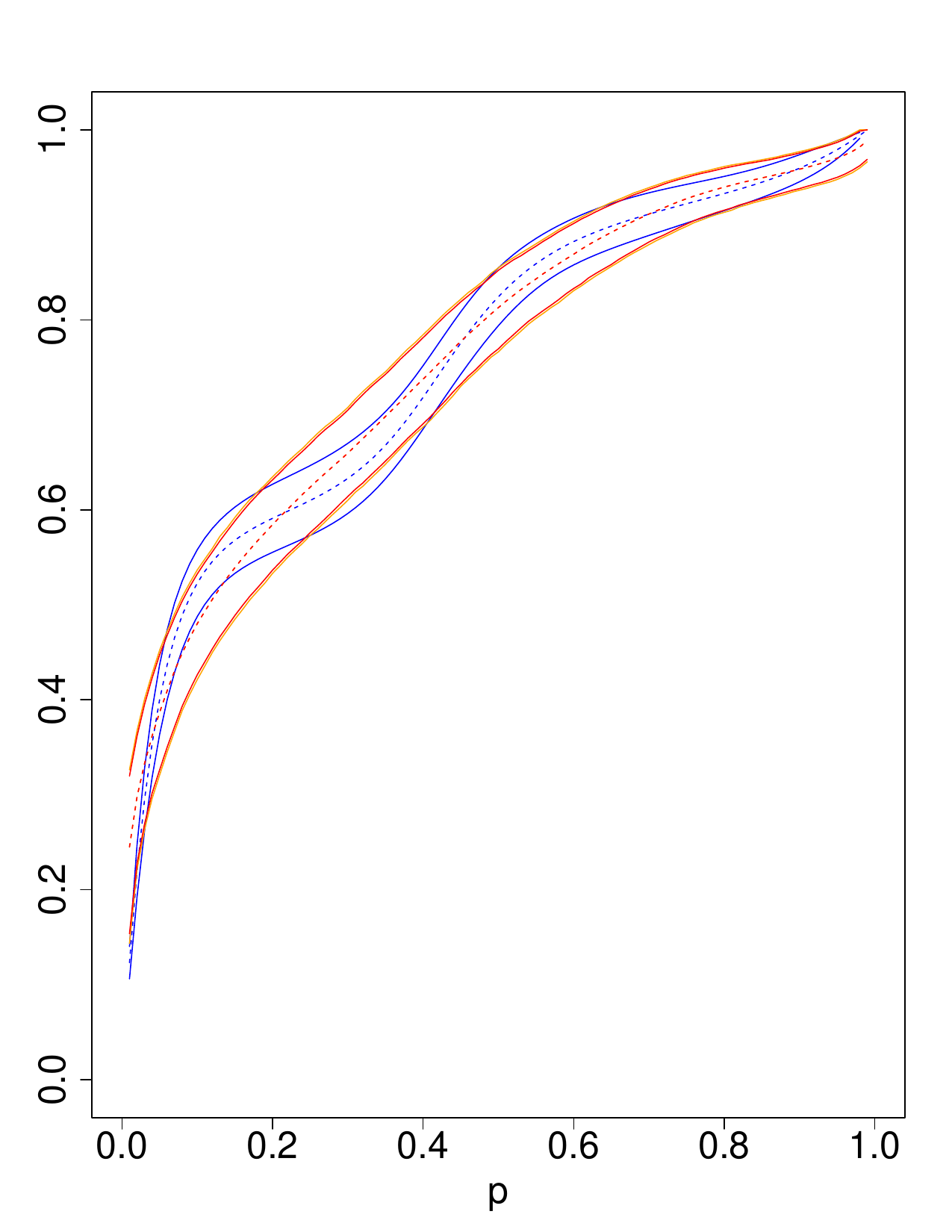}
\end{center}
\caption{GFD versus $g$-modeling.
Estimated CDF (dashed line) and 95\% CIs for $F(p)$  of GFD and Efron's $g$-modeling. The red and orange curves are mixture and conservative confidence intervals, respectively; The blue curve is Efron's confidence intervals without bias correction. Efron's confidence intervals with bias correction looks almost the same as without correction thus we omit in the figure.
}
\label{pic:real}
\end{figure}

The overall shapes of bootstrap and Bayesian confidence intervals are similar to the fiducial ones but much narrower.
We provide bootstrap and Bayesian point estimates and 95\% confidence intervals in the Supplementary Material.
Figure~\ref{pic:real} shows the point estimates and 95\% confidence intervals of the distribution function $F$ for the proposed GFD approach and Efron's $g$-modeling.
Overall, the GFD confidence interval is more conservative. The GFD confidence intervals cover Efron's almost everywhere.

For the proposed fiducial approach, there is a large mode for the upper fiducial confidence interval near  $p= 0$, which coincides with the fact that about $38\%$ of the $X_i$'s are 0 in the surgery data. However, the Bayesian method and Efron's $g$-modeling seem to quantify uncertainty of this proportion to be lower. One exception is the nonparametric bootstrap which gives the estimation of point mass at zero $0.38$ with 95\% confidence intervals $(0.35,0.41)$. We note that this might be overestimated as $X_i=0$ may correspond to a non-zero probability $p$ especially when $m_i$ is small.

Moreover, the generalized fiducial confidence intervals provide us a unimodal density, while Efron's gives a bimodal density.
We believe that the fiducial as well as bootstrap and nonparametric Bayesian answers are more in line with Efron's observation that for those $X_i \neq 0$ in the surgery data, $X_i /m_i$ has an approximate $\tunif(0,1)$ \yifan{distribution} \citep{10.1093/biomet/asv068}.

\section{Discussion}\label{sec:discussion}
In this paper, we proposed a prior-free approach to nonparametric deconvolution problem, and obtained valid point estimates and confidence intervals. 
This was accomplished through a novel algorithm to sample from the GFD.
The median of the GFD is used as the point estimate, and appropriate quantiles of the GFD evaluated at a given $p$
provide pointwise confidence intervals.
We also studied the theoretical properties of the fiducial distribution. Extensive simulations show that the proposed fiducial approach is a good alternative to existing methods such as Efron's $g$-modeling.
We applied the proposed fiducial approach to intestinal surgery data to estimate the probability of each satellite site being malignant for patients.

We conclude by listing some open research problems:
\begin{enumerate}
\item
The proposed fiducial method seems to be a powerful nonparametric approach.
It would be interesting to implement it inside other statistical procedures such as tree or random forest models to include covariates \citep{wu2019uncertainty}.

\item This paper focuses on discrete data. The proposed approach can be extended to continuous data, such as $\tnorm(\Theta,1)$, where $\Theta$ follows a distribution function $F$.
    This part is currently under investigation.

\item As we can see in simulations, the GFD approach is sometimes over-conservative. It could be possible to consider a
different choice of fiducial samples, such as log-interpolation \citep{cuihannig2019} or monotonic spline interpolation \citep{Taraldsen2019discuss,cuihannig2019rejoinder}.

\item The asymptotic distribution results in Section~\ref{sec:theory} only hold under Assumption~\ref{as:2}.
To investigate a non-$n^{1/2}$ rate of convergence should make a fruitful avenue of future research.

\item It should be possible to use the GFD in conjunction with various functional norms to construct simultaneous confidence bands \citep{cuihannig2019,nair1984confidence, martin2019discuss}.
\end{enumerate}

\section{Acknowledgments}
 The authors are thankful to Prof. Hari Iyer for helpful discussions.
The authors are also thankful to the referees, associate editor, and editor for helpful comments which led to an improved manuscript.
This work was supported in part by the National Natural Science Foundation of China, the Singapore Ministry of Education,  the National Institute of Health, and the National Science Foundation.

\newpage

\begin{center}
{\LARGE \textbf{Appendix}}
\end{center}

\appendix
\section{Lemmas \label{thm:0}}
Recall that the observed data points $x_i$ are integers.
\begin{lemma}
$x_i=G_i^{-1}(u_i,\theta_i)$ if and only if $G_i^{*}(x_i-1,u_i) < \theta_i \leq G_i^{*}(x_i,u_i)$.
\label{lemma:equiv}
\end{lemma}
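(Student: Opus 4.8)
The plan is to reduce the stated equivalence to a single monotone-inverse (Galois) duality and then apply it twice. First I would record the exact characterization of the discrete quantile: since $G_i(\cdot,\theta_i)$ is the CDF of an integer-valued random variable, its generalized inverse $G_i^{-1}(u_i,\theta_i)=\inf\{x:G_i(x,\theta_i)\ge u_i\}$ equals the integer $x_i$ precisely when $x_i$ lies in the defining set but $x_i-1$ does not, i.e.
\[
x_i=G_i^{-1}(u_i,\theta_i)\iff G_i(x_i-1,\theta_i)<u_i\le G_i(x_i,\theta_i).
\]
The lower bound follows because $G_i(\cdot,\theta_i)$ is nondecreasing in its first argument, so $G_i(x,\theta_i)\le G_i(x_i-1,\theta_i)<u_i$ for every integer $x\le x_i-1$. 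Thus the whole statement reduces to showing that this pair of inequalities is equivalent to $G_i^*(x_i-1,u_i)<\theta_i\le G_i^*(x_i,u_i)$.

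The key step is the duality, valid for every integer $k$,
\[
u_i\le G_i(k,\theta_i)\iff \theta_i\le G_i^*(k,u_i),
\]
where $G_i^*(k,u_i)=\sup\{\theta:G_i(k,\theta)\ge u_i\}$. Here I would exploit that $G_i(k,\cdot)$ is non-increasing in $\theta$, so the set $S_k=\{\theta:G_i(k,\theta)\ge u_i\}$ is downward closed: if $\theta\in S_k$ and $\theta'<\theta$ then $G_i(k,\theta')\ge G_i(k,\theta)\ge u_i$. One implication is immediate: if $u_i\le G_i(k,\theta_i)$ then $\theta_i\in S_k$, hence $\theta_i\le\sup S_k=G_i^*(k,u_i)$. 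For the converse, if $\theta_i<G_i^*(k,u_i)$ there is some $\theta'\in S_k$ with $\theta_i<\theta'$, and downward closure gives $\theta_i\in S_k$; the remaining boundary case $\theta_i=G_i^*(k,u_i)$ requires the supremum to be attained, which is where I expect the only real friction. This is guaranteed as soon as $G_i(k,\cdot)$ is left-continuous in $\theta$, since then $G_i(k,\sup S_k)=\lim_{\theta\uparrow\sup S_k}G_i(k,\theta)\ge u_i$, and in particular it holds for the continuous-in-$\theta$ families used throughout the paper, such as the binomial and Poisson CDFs.

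Finally I would assemble the two halves. Applying the duality with $k=x_i$ gives $u_i\le G_i(x_i,\theta_i)\iff\theta_i\le G_i^*(x_i,u_i)$, the non-strict upper bound. Applying it with $k=x_i-1$ and negating both sides converts $G_i(x_i-1,\theta_i)<u_i$ into $\theta_i>G_i^*(x_i-1,u_i)$, the strict lower bound; the strict/non-strict asymmetry of the conclusion mirrors exactly the asymmetry $G_i(x_i-1,\theta_i)<u_i\le G_i(x_i,\theta_i)$ and the use of $\ge$ in the definition of $G_i^*$. Conjoining the two yields $G_i^*(x_i-1,u_i)<\theta_i\le G_i^*(x_i,u_i)$, as claimed. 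I would also dispose of the degenerate cases using the stated convention that $\sup\emptyset$ is below all of $\mathcal S$: when $S_{x_i}=\emptyset$ no $\theta_i$ can produce the value $x_i$ and the right-hand interval is empty, so both sides are vacuously false.
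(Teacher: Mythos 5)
Your proposal is correct and follows essentially the same route as the paper's proof: reduce $x_i=G_i^{-1}(u_i,\theta_i)$ to the two-sided inequality $G_i(x_i-1,\theta_i)<u_i\le G_i(x_i,\theta_i)$, then invert in $\theta_i$ using the definition of $G_i^{*}$, with the strict/non-strict asymmetry arising exactly as you describe. The only difference is rigor: where the paper simply says ``by the definition of $G_i^{*}$,'' you prove the monotone duality $u_i\le G_i(k,\theta_i)\iff\theta_i\le G_i^{*}(k,u_i)$ explicitly and correctly flag that the boundary case $\theta_i=G_i^{*}(k,u_i)$ requires left-continuity of $G_i(k,\cdot)$ in $\theta$ --- an assumption the paper leaves implicit (it holds for the binomial and Poisson families it considers).
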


\begin{proof}

Recall the definition of $G_i^{-1}(u_i,\theta_i)= \inf\{x_i: G_i(x_i,\theta_i)\geq u_i\}$. So $x_i=G_i^{-1}(u_i,\theta_i)$ if and only if $G_i(x_i,\theta_i)\geq u_i >G_i(x_i-\epsilon,\theta_i)$ for all $\epsilon>0$. Then by the definition of $G_i^{*}$, it is further equivalent to $G_i^{*}(x_i-1,u_i) < \theta_i \leq G_i^{*}(x_i,u_i)$.
\end{proof}

\begin{lemma}
$Q_{\bx}(\bu,\bw)\neq\emptyset$ if and only if $\bu, \bw$ satisfy: whenever $G_i^*(x_i,u_i)\leq G_j^*(x_j-1,u_j)$ then $w_i< w_j$.
\label{lemma:equiv2}
\end{lemma}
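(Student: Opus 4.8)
The plan is to prove both implications, writing $a_i \equiv G_i^*(x_i-1,u_i)$ and $b_i \equiv G_i^*(x_i,u_i)$ throughout, so that membership $F\in Q_{\bx}(\bu,\bw)$ means exactly $F(a_i)<w_i\le F(b_i)$ for every $i$. Since $G_i(x_i-1,\cdot)\le G_i(x_i,\cdot)$ (the CDF is non-decreasing in its integer argument), the defining sets of the two suprema are nested and hence $a_i\le b_i$ for all $i$; this elementary fact will be used repeatedly.

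For the forward implication, suppose some $F\in Q_{\bx}(\bu,\bw)$ exists and that $b_i\le a_j$. Because $F$ is a distribution function, it is non-decreasing, so $w_i\le F(b_i)\le F(a_j)<w_j$, which gives $w_i<w_j$. This is the routine direction and requires only monotonicity.

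For the converse, which is the main work, I would exhibit an explicit witness, namely the lower fiducial bound itself: set $F(s)\equiv\max\{w_i:\ b_i\le s\}$ with the convention $\max\emptyset=0$, and complete it to a genuine distribution function by placing the residual mass $1-\max_i w_i$ above $\max_i b_i$ inside $\mathcal S$. This $F$ is non-decreasing, and, since the $b_i$ are finitely many, it is right-continuous with left limits, so it is a bona fide CDF supported on $\mathcal S$. The upper constraint $w_i\le F(b_i)$ is immediate because $b_i\le b_i$ places $w_i$ into the set over which the maximum is taken. The crux is the strict lower constraint $F(a_i)<w_i$: every index $j$ contributing to $F(a_i)$ satisfies $b_j\le a_i$, so the hypothesis, applied with the roles of $i$ and $j$ interchanged, forces $w_j<w_i$; as there are finitely many such $j$ the maximum lies strictly below $w_i$, while if no such $j$ exists the maximum is $0<w_i$. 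Hence $F\in Q_{\bx}(\bu,\bw)$ and the set is non-empty.

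I expect the only real obstacle to lie in the bookkeeping of the converse: confirming that the step function $F$ is a legitimate right-continuous CDF, that its total mass can be completed to one inside $\mathcal S$ without disturbing the finitely many constraints (adding mass strictly above every $b_i$ leaves each $F(a_i)$ and $F(b_i)$ unaffected, since $a_i<b_i$), and that the degenerate conventions for $\sup\emptyset$ remain consistent. On this last point, note that the hypothesis applied with $i=j$ already excludes $b_i\le a_i$, thereby forcing $a_i<b_i$ and ruling out the empty-supremum pathologies; these are precisely the configurations in which $Q_{\bx}(\bu,\bw)$ would be forced empty, so the edge cases line up correctly on both sides of the equivalence.
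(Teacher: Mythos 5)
Your proof is correct, and for the hard direction it takes a genuinely different --- and more complete --- route than the paper. The paper's proof of the forward implication (non-emptiness implies the ordering condition) is identical to yours: monotonicity of $F$ gives $w_i \leq F(b_i) \leq F(a_j) < w_j$. But for the converse the paper argues by contradiction, asserting that if $Q_{\bx}(\bu,\bw)=\emptyset$ then there \emph{must} exist a pair of indices whose intervals are disjoint and ordered, yet whose $w$'s violate the ordering; this assertion is exactly the combinatorial content of the lemma and the paper never justifies it. You close that gap by exhibiting an explicit witness, $F(s)=\max\{w_j : b_j\le s\}$ completed to total mass one above all the $b_j$'s --- which is precisely the lower fiducial bound $F^L$ of Section~2.1 evaluated at the given $(\bu,\bw)$ --- and verifying both constraints directly, with the hypothesis applied to pairs $(j,i)$ delivering the strict inequality $F(a_i)<w_i$ and the diagonal case $i=j$ ruling out degenerate intervals. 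Your construction buys rigor and also makes transparent \emph{why} the condition characterizes non-emptiness (the feasible set, when non-empty, contains a canonical minimal element); the paper's argument is shorter but only persuasive, not a proof. Two small caveats to tidy up: if $\max_i b_i$ equals $\sup\mathcal S$ you cannot put the residual mass \emph{strictly} above every $b_i$, but placing it at $\max_i b_i$ itself is harmless since $a_i<b_i\le\max_j b_j$ for all $i$; and your step ``the maximum is $0<w_i$'' silently uses $w_i>0$ --- indeed the lemma as literally stated fails if some $w_i=0$ --- which is fine here because the $w_i$ are realizations of $\tunif(0,1)$ variables, but is worth flagging as the implicit standing assumption.
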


\begin{proof}
Sufficiency: If $Q_{\bx}(\bu,\bw)\neq\emptyset$ holds, and $G_i^*(x_i,u_i)\leq G_j^*(x_j-1,u_j)$, then we know that $ w_i \leq F(G_i^*(x_i,u_i))\leq F(G_j^*(x_j-1,u_j))< w_j $.

Necessity: We prove it by contradiction. If $Q_{\bx}(\bu,\bw)$ is empty, then there must exist indices $i$ and $j$ such that, $(G_j^*(x_j-1,u_j),G_j^*(x_j,u_j)]$ is strictly larger than $(G_i^*(x_i-1,u_i),G_i^*(x_i,u_i)]$ but $w_i \geq w_j$. This contradicts with whenever $G_i^*(x_i,u_i)\leq G_j^*(x_j-1,u_j)$ then $w_i< w_j$.
\end{proof}

\section{Proof of Theorem~\ref{main} \label{thm:3.1}}
\begin{proof}
Recall that the data generating equation is
\begin{equation}\label{eq:DGEinApp}
X_i= G_i^{-1}(U_i,P_i),
\quad 
P_i=F^{-1}(W_i),
\end{equation}
where $G_i$ is the CDF of binomial distribution.
Define the oracle fiducial distribution based on unobserved $P_1,\ldots,P_n$ as
\begin{equation}\label{eq:OracleFidL}
\widetilde F(s)=\sum_{i=0}^n I[
P_{(i)} \leq s< P_{(i+1)}]  W^*_{(i)},
\end{equation}
where $P_{(0)} \equiv 0$, $P_{(n+1)} \equiv 1$, $W^*_{(i)}$ are uniform order statistics, and $W^*_{(0)} \equiv 0$.
Notice \eqref{eq:OracleFidL} is the lower fiducial distribution $F^L$ in \cite{cuihannig2019} when there is no censoring.
By Corollary
~\ref{corollary} of the Supplementary Material, we have  
\begin{align*}
n^{1/2}\{ \widetilde F(\cdot)-\widehat F_n(\cdot)  \} \rightarrow \{1-F(\cdot)\} B(\gamma(\cdot)),
\end{align*}
in distribution on Skorokhod space $\mathcal D[0,1]$ in probability, where $B$ is the Brownian Motion, $\gamma(t)=\int_0^t \frac{f(s)}{[1-F(s)]^2} ds= \frac{F(t)}{1-F(t)}$, and $\widehat F_n$ is the oracle empirical distribution function defined in \eqref{eq:OracleECDF}. 
We also define
\begin{align*}
\widetilde F^L(s)\equiv \sum_{i=0}^n I[G^U_{(i)}\leq s < G^U_{(i+1)}] W^*_{(i)},
\end{align*} 
and
\begin{equation*}
\widetilde F^U(s)\equiv \sum_{i=0}^n I[G^L_{(i)}\leq s < G^L_{(i+1)}] W^*_{(i+1)},
\end{equation*}
where $G^U_{(1)},\ldots,G^U_{(n)}$ are order statistics of $\{G_i^*(x_i,U_i^\star),i=1,\ldots,n\}$,  $G^L_{(1)},\ldots,G^L_{(n)}$ are order statistics of $\{G_i^*(x_i-1,U_i^\star),i=1,\ldots,n\}$, $G^U_{(0)} \equiv 0$, $G^U_{(n+1)} \equiv 1$, $G^L_{(0)} \equiv 0$, $G^L_{(n+1)} \equiv 1$, and  $W^*_{(n+1)} \equiv 1$. Note that $\widetilde F^L$ and $\widetilde F^U$ can be regarded as lower and upper bounds of $F^L$ and $F^U$, respectively.

In order to obtain
\begin{align*}
n^{1/2} \{F^L(\cdot) -\widehat F_n(\cdot)\} \rightarrow \{1-F(\cdot)\} B(\gamma(\cdot)), 
\end{align*}
in distribution in probability, it is enough to show that 
\begin{align*}
\sup_s n^{1/2} |\widetilde F(s)- F^L(s) | \rightarrow 0,
\end{align*}
in probability, which would be implied by 
\begin{align}\label{eq:conv}
\sup_s n^{1/2} \{\widetilde F^U(s)- \widetilde F^L(s) \} \rightarrow 0,
\end{align}
in probability. In order to show Equation~\eqref{eq:conv}, one essentially needs to show for any $\epsilon>0$,
\begin{align*}
pr(\sup_s n^{1/2} \{\widetilde F^U(s)- \widetilde F^L(s) \}>\epsilon ) \rightarrow 0.
\end{align*}
By 
Lemma~\ref{lemma:nointersect}, 
given in the Supplementary Material, there is no intersection between $\{(G_i^*(x_i-1,U_i^\star),G_i^*(x_i,U_i^\star)],i=1,\ldots,n\}$ with a large probability converging to 1.

Thus, we have that
\begin{align*}
 pr(\sup_s n^{1/2} \{\widetilde F^U(s)- \widetilde F^L(s) \}>\epsilon ) 
\leq & \sum_{i=0}^n pr(W^*_{(i+1)}-W^*_{(i)} > \frac{\epsilon}{n^{1/2}})  \\ 
= & (n+1) \times pr(Beta(1,n)>\frac{\epsilon}{n^{1/2}} )\\
= & (n+1) \times (1-\frac{\epsilon}{n^{1/2}})^n
\rightarrow  0.
\end{align*}

Therefore, we have 
\begin{align*}
n^{1/2} \{F^L(\cdot) -\widehat F_n(\cdot)\} \rightarrow \{1-F(\cdot)\} B(\gamma(\cdot)),
\end{align*}
 in distribution in probability. Note that for any $t<s$,
$$\operatorname{cov}[\{1-F(s)\}B(\gamma(s)),\{1-F(t)\} B(\gamma(t))]=\gamma(t)\{1-F(s)\}\{1-F(t)\}=F(t)\{1-F(s)\},$$ which completes the proof.
\end{proof}

\newpage

\begin{center}
{\LARGE \textbf{Supplementary Material}}
\end{center}

\section{Lemma~\ref{lemma:nointersect0}  and its proof}
\begin{lemma}\label{lemma:nointersect0}
Assume the conditions of Theorem~\ref{main}. Suppose that $X_i \sim
\tbin(m_i
, P_i)$, where $P_i$ are unobserved i.i.d. random variables with distribution function $F$.
We have that \begin{align}\label{eq:xm}
\min_{i,j\in\{0,\ldots,n\}} \left \{\frac{X_i}{m_i}-\frac{X_j}{m_j}\right\} \succeq O\left(\frac{1}{n^2(\log n)^{\sqrt{\epsilon/2}}}\right),
\end{align}
for any $\epsilon>0$, with a large probability converging to 1.
\end{lemma}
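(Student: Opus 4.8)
The plan is to transfer the separation of the latent probabilities $P_i$ to the observed ratios $X_i/m_i$ by a concentration argument, exploiting that under Assumption~\ref{as:2} the number of trials $m_i$ is so large that each $X_i/m_i$ lies within a negligible window of $P_i$. I read the quantity on the left of \eqref{eq:xm} as the smallest positive gap $\min\{|X_i/m_i-X_j/m_j|:X_i/m_i\neq X_j/m_j\}$, with the convention that the boundary values $0$ and $1$ are adjoined to the point set (matching $P_{(0)}=0$ and $P_{(n+1)}=1$ in the proof of Theorem~\ref{main}).

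First I would show that the latent points are well separated. Since $F$ is absolutely continuous with bounded density $f$, for any $i\neq j$ and any $\delta>0$,
\begin{equation*}
pr(|P_i-P_j|\leq \delta)=E\{F(P_j+\delta)-F(P_j-\delta)\}\leq 2\|f\|_\infty\,\delta .
\end{equation*}
A union bound over the $\binom{n}{2}$ pairs (and the $O(n)$ pairs involving the two boundary points) then gives $pr(\min_{i\neq j}|P_i-P_j|\leq \delta)\leq C n^2\delta$. Taking $\delta=\delta_n=1/\{n^2(\log n)^{\sqrt{\epsilon/2}}\}$ makes $n^2\delta_n=(\log n)^{-\sqrt{\epsilon/2}}\to 0$, so with probability tending to one the $P_i$ are separated by at least $\delta_n$.

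Next I would show that the empirical ratios track the latent probabilities uniformly. Conditioning on $P_i=p_i$, $X_i$ is a sum of $m_i$ independent Bernoulli$(p_i)$ variables, so Hoeffding's inequality gives $pr(|X_i/m_i-P_i|>t)\leq 2\exp(-2m_it^2)\leq 2\exp(-2m_{\min}t^2)$ uniformly in $p_i$, where $m_{\min}=\min_i m_i$, and a union bound yields
\begin{equation*}
pr\Big(\max_{i}|X_i/m_i-P_i|>t\Big)\leq 2n\exp(-2m_{\min}t^2).
\end{equation*}
Choosing $t_n=1/\{n^2(\log n)^{b}\}$ with $b>\sqrt{\epsilon/2}$ and invoking Assumption~\ref{as:2} with an exponent $\epsilon'$ large enough that $2b<1+\epsilon'$ (permissible since the assumption holds for every positive exponent) forces $2m_{\min}t_n^2\gtrsim(\log n)^{1+\epsilon'-2b}$ with $1+\epsilon'-2b>1$, which dominates $\log(2n)$; hence $\max_i|X_i/m_i-P_i|\leq t_n$ with probability tending to one. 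On the intersection of the two good events the triangle inequality gives, for every pair,
\begin{equation*}
\Big|\frac{X_i}{m_i}-\frac{X_j}{m_j}\Big|\geq |P_i-P_j|-2t_n\geq \delta_n-2t_n\geq \tfrac12\delta_n,
\end{equation*}
since $t_n=o(\delta_n)$; in particular all ratios are distinct and the minimum positive gap is at least $\tfrac12\delta_n$, which is exactly the order asserted in \eqref{eq:xm}.

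The only delicate point is the bookkeeping of the logarithmic powers: one must simultaneously keep the separation scale $\delta_n$ large enough that $n^2\delta_n\to 0$ and the concentration scale $t_n$ small enough that $2m_{\min}t_n^2$ beats $\log n$, all while retaining $t_n=o(\delta_n)$. Assumption~\ref{as:2} supplies precisely the slack needed, because it holds for every $\epsilon>0$, so the target exponent $\sqrt{\epsilon/2}$—indeed any fixed positive power of $\log n$—can be cleared by invoking the assumption with a suitably large exponent. The stated exponent is therefore not sharp; its essential content is that the minimum gap is of order $n^{-2}$ up to a poly-logarithmic factor, which comfortably exceeds the $O(1/m_i)$ widths of the intervals $(G_i^*(x_i-1,U_i^\star),G_i^*(x_i,U_i^\star)]$ under Assumption~\ref{as:2} and so rules out their intersection, as required in the proof of Theorem~\ref{main}.
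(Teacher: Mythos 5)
Your proof is correct and takes essentially the same route as the paper's: first establish a separation of order $n^{-2}(\log n)^{-\sqrt{\epsilon/2}}$ among the latent $P_i$ (the paper does this via uniform order-statistic spacings, you via a pairwise union bound using the bounded density), then transfer it to the ratios $X_i/m_i$ by a concentration inequality whose deviation scale is $o(\delta_n)$ under Assumption~\ref{as:2} (the paper uses Bernstein, you use Hoeffding; both are adequate here). One bookkeeping slip to correct: for $2m_{\min}t_n^2 \gtrsim (\log n)^{1+\epsilon'-2b}$ to dominate $\log(2n)$ you need $\epsilon' > 2b$, not merely $2b < 1+\epsilon'$ as written --- harmless, since, as you note, the assumption holds for every positive exponent, so you may simply take $\epsilon'$ that large.
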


\begin{proof}
We first show that the unobserved $P_i=F^{-1}(W_i)$ are well separated.
Straightforward calculation with uniform order statistics shows that
\[pr\left(\min_{i\in\{0,\ldots,n\}} \left \{W_{(i+1)}-W_{(i)}\right\} >\frac{t}{n(n+1)}\right)\geq \left(1-\frac{t}{n}\right)^{n},\]
for any $t>0$, where $W_{(0)} \equiv 0$ and $W_{(n+1)} \equiv 1$. Therefore,
\begin{align*}
\min_{i\in\{0,\ldots,n\}} \left \{P_{(i+1)}-P_{(i)}\right\} \succeq O\left(\frac{1}{n^2(\log n)^{\sqrt{\epsilon/2}}}\right),
\end{align*}
with a large probability converging to 1, where $P_{(0)} \equiv 0$ and $P_{(n+1)} \equiv 1$.
Furthermore, by the following Bernstein inequality for binomial $X \sim \tbin(m,P)$,
\begin{align*}
pr(|X - m P|\geq t) \leq 2\exp \left\{-\frac{t^2}{2[mP(1-P)+t/3]}\right\},
\end{align*}
where $t>0$, and taking $t\sim O\left(\frac{m}{n^2(\log n)^{\sqrt{\epsilon/2}}}\right)$,
we have that
\begin{align*}
\min_{i,j\in\{0,\ldots,n\}} \left \{\frac{X_i}{m_i}-\frac{X_j}{m_j}\right\} \succeq O\left(\frac{1}{n^2(\log n)^{\sqrt{\epsilon/2}}}\right),
\end{align*}
with a large probability converging to 1 because $\lim_{n \rightarrow \infty} n^4(\log n)^{1+\epsilon}/ m_i = 0$ for any $m=m_i$ given by Assumption~1.
\end{proof}

\section{Lemma~\ref{lemma:nointersect}  and its proof}
\begin{lemma}\label{lemma:nointersect}
Given data which satisfy Equation~\eqref{eq:xm} of Lemma~\ref{lemma:nointersect0}, if $(\bU^\star,\bW^\star)$ are uniformly distributed on the set $\{(\bu^\star,\bw^\star): Q_{\bx}(\bu^\star,\bw^\star)\neq\emptyset\},$ we have that
\begin{equation}\label{eq:lemma2}
pr\{(G_i^*(x_i-1,U_i^\star), G_i^*(x_i,U_i^\star)]\cap (G_j^*(x_j-1,U_j^\star), G_j^*(x_j,U_j^\star)]\neq\emptyset \mbox{ ~for some~ $i\neq j$}\}\to 0.
\end{equation}
\end{lemma}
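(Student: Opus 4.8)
The plan is to show that, on the event that the data obey \eqref{eq:xm}, each random interval $I_i:=(G_i^*(x_i-1,U_i^\star),\,G_i^*(x_i,U_i^\star)]$ is confined to a tiny neighbourhood of the empirical proportion $x_i/m_i$, and that these neighbourhoods are pairwise disjoint because \eqref{eq:xm} separates the proportions (in particular it already rules out ties among the $x_i/m_i$). Write $\delta_n:=n^{-2}(\log n)^{-\sqrt{\epsilon/2}}$ for the separation rate, so that $|x_i/m_i-x_j/m_j|\gtrsim\delta_n$ for $i\neq j$. First I would record that $I_i$ has width of order $1/m_i$: its endpoints are the $(1-U_i^\star)$-quantiles of $\tbeta(x_i+1,m_i-x_i)$ and $\tbeta(x_i,m_i-x_i+1)$, whose means differ by $O(1/m_i)$. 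Since $1/m_i\ll\delta_n$ under Assumption~\ref{as:2}, it suffices to prove that with probability tending to one every $I_i$ lies within distance $\delta_n/3$ of $x_i/m_i$; intervals anchored at centres at least $\delta_n$ apart, each of extent $\delta_n/3$, are then separated, and a union bound over the $\binom{n}{2}$ pairs yields \eqref{eq:lemma2}.

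Next I would turn the geometric statement into a tail bound on $U_i^\star$. Because $G_i^*(x_i,u)$ is the $(1-u)$-quantile of $\tbeta(x_i+1,m_i-x_i)$, the event $\{G_i^*(x_i,U_i^\star)>x_i/m_i+\delta_n/3\}$ is exactly $\{U_i^\star<\beta_i\}$ with $\beta_i=\operatorname{pr}\{\tbin(m_i,x_i/m_i+\delta_n/3)\le x_i\}$, by the beta--binomial identity. The Bernstein inequality used in Lemma~\ref{lemma:nointersect0} bounds $\beta_i\le\exp(-c\,m_i\delta_n^2)$, and the symmetric computation controls the lower endpoint through $\{U_i^\star>1-\beta_i'\}$ with an analogous bound. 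The arithmetic I must check is $m_i\delta_n^2\gg\log n$: Assumption~\ref{as:2} gives $\min_i m_i\gg n^4(\log n)^{1+\epsilon'}$ for every $\epsilon'>0$, while $\delta_n^{-2}\asymp n^4(\log n)^{\sqrt{2\epsilon}}$, so taking $\epsilon'>\sqrt{2\epsilon}$ makes $m_i\delta_n^2\gtrsim(\log n)^{1+\epsilon'-\sqrt{2\epsilon}}$, which exceeds any fixed multiple of $\log n$. Hence each offending probability is $o(n^{-K})$ for every $K$, as a union bound over the $n$ indices requires.

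The main obstacle is that $U_i^\star$ is not marginally uniform, since $(\bU^\star,\bW^\star)$ is uniform on $\{Q_{\bx}\neq\emptyset\}$ and conditioning could in principle shift mass toward the extreme values that create intersections. I would control this through the marginal density of $U_i^\star$, which at $t$ equals $\bar V_i(t)/\operatorname{pr}(Q_{\bx}\neq\emptyset)$, where $\bar V_i(t)$ is the conditional probability that $\bW^\star$ satisfies \eqref{eq:constraint2} given $U_i^\star=t$, averaged over $\bU^\star_{[-i]}$. A family of pairwise disjoint intervals induces a total order, hence a unique admissible ranking of $\bW^\star$, so $\operatorname{pr}(Q_{\bx}\neq\emptyset)\ge 1/n!$; on the other hand $\bar V_i(t)=n!^{-1}\,\mathbb E[N(t,\bU^\star_{[-i]})]$ with $N$ the number of linear extensions of the induced interval order. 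The heart of the matter is that $\mathbb E[N]=O(1)$ uniformly in $t$: the intervals are pairwise disjoint off an event of probability $O(\sum_i e^{-c\,m_i\delta_n^2})$, and a clump of $k$ mutually overlapping intervals inflates $N$ by at most $k!$ while forcing $k$ of the $U^\star$'s into their extreme windows at once, an event of probability $e^{-\Omega(k\,m\delta_n^2)}$ small enough to swamp $k!$ because $m\delta_n^2$ outgrows every multiple of $\log n$. Granting $\mathbb E[N]=O(1)$ gives $\bar V_i(t)=O(1/n!)$, whence $\operatorname{pr}(U_i^\star<\beta_i)=\int_0^{\beta_i}\bar V_i(t)\,dt\,/\,\operatorname{pr}(Q_{\bx}\neq\emptyset)=O(\beta_i)$, and the union bound of the previous paragraph closes the argument. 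Making this expected-linear-extension bound rigorous — especially the combinatorial control of simultaneous clumping — is where the real work lies.
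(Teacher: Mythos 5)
Your strategy is sound and ultimately rests on the same quantitative tradeoff as the paper's proof, but the architecture is genuinely different, so a comparison is worthwhile. The paper never localizes the intervals around $x_i/m_i$: it works directly with i.i.d.\ uniforms $\widetilde U_i$, so that the endpoints $L_i=G_i^*(x_i-1,\widetilde U_i)\sim\tbeta(x_i,m_i-x_i+1)$ and $R_i=G_i^*(x_i,\widetilde U_i)\sim\tbeta(x_i+1,m_i-x_i)$ are Beta variables, bounds the probability $q_k$ of a $k$-fold intersection by sub-Gaussianity of Beta laws \citep{marchal2017} combined with the separation \eqref{eq:xm}, and absorbs the conditioning on $\{Q_{\bx}\neq\emptyset\}$ through an importance-sampling correction: a configuration in which $k+1$ intervals share a common point admits at most $(k+1)!$ admissible orderings of the $W$'s, so it suffices to show $\sum_k (k+1)!\,q_k\to 0$. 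You handle the conditioning instead by bounding the conditional marginal density of $U_i^\star$, written as $\bar V_i(t)/pr(Q_{\bx}\neq\emptyset)$ with $\bar V_i(t)=E[N]/n!$ and $pr(Q_{\bx}\neq\emptyset)\ge 1/n!$, and you replace the sub-Gaussian Beta input by the more elementary beta--binomial identity plus Bernstein. Your density bound is a cleaner, reusable intermediate statement (it converts any i.i.d.-uniform tail estimate into a conditional one), while the paper's route is shorter and avoids per-coordinate localization; the factorial-versus-exponential tradeoff, and the use of Assumption~\ref{as:2} to guarantee $\tilde m\delta_n^2\gg\log n$ with $\tilde m=\min_i m_i$, are identical in both.

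Three caveats on your side. First, in your formula for $\bar V_i(t)$ the expectation over the remaining coordinates must be taken with respect to i.i.d.\ uniforms (Lebesgue measure), not the starred conditional law; your notation $\bU^\star_{[-i]}$ conflates the two, though your own derivation forces the former. Second, the claim $\sup_t E[N(t,\cdot)]=O(1)$ is false as stated: for $t$ exponentially small the fixed interval $I_i(t)$ can become macroscopic (e.g., $x_i=0$ gives $I_i(t)=[0,\,1-t^{1/m_i}]$) and then overlaps linearly many of the well-separated intervals, so $E[N(t,\cdot)]$ can be of order $n$ on the very window $[0,\beta_i]$ you integrate over. Fortunately the flaw is harmless: any polynomial bound on the density suffices, since $\beta_i=e^{-\Omega(\tilde m\delta_n^2)}$ is super-polynomially small, so $n^2\max_i\beta_i\to 0$ still closes the union bound. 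Third, the step you defer as ``the real work'' --- trading clump factorials $k!$ against clump probabilities $e^{-\Omega(k\tilde m\delta_n^2)}$ over at most $\binom{n}{k}$ choices --- is not an obstacle beyond what the paper itself does: it is precisely the paper's displayed computation, bounding $(k+1)!\binom{n}{k}\exp\{-\tilde m\,O(k\delta_n)^2\}$ and summing over $k$, so you may import it essentially verbatim; in doing so, note that linear extensions factor over connected components of the overlap graph (the components of a family of intervals are linearly ordered on the line), which also covers chains of overlapping intervals that share no common point, a case the paper's ``common area'' phrasing glosses over.
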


\begin{proof}
Let $\widetilde U_i$ be i.i.d. $U(0,1)$ and 
denote
$L_i=G_i^*(x_i-1,\widetilde U_i) \sim \tbeta(x_i,m_i-x_i+1)$, and $R_i= G_i^*(x_i,\widetilde U_i)\sim \tbeta(x_i+1,m_i-x_i)$.
Recall that the proposed Algorithm~1 can be regarded as an importance sampling in the following way: if there is one $k$-intersections (i.e., $(k+1)$ intervals share one common area) in $\{(L_i,R_i],i=1,\ldots,n\}$, the corresponding $W^*$ have $(k+1)!$ possible permutations. So we essentially need to show $\sum_{k=1}^{n-1} (k+1)!\times q_k \rightarrow 0$ as $n \to 0$, where $q_k$ is defined as the probability of $\{(L_i,R_i],i=1,\ldots,n\}$ having one or more $k$-intersections, and $W^*_{i_1}<\ldots<W^*_{i_{k+1}}$ (say $i_1,\ldots,i_{k+1}$ are indices of intervals which intersect).

We start with one-intersection between $i$-th and $j$-th intervals, i.e., $(L_i,R_i)\cap (L_j,R_j) \neq \emptyset$, which is equivalent to $L_1\leq R_2$ and $L_2\leq R_1$.
Without loss of generality, we focus on $L_1\leq R_2$ as $$pr(A_1\cap A_2)\leq \min\{pr(A_1),pr(A_2)\}.$$
For a random variable $Y\sim \tbeta(\alpha,\beta)$, let $\mu=E[Y]=\frac{\alpha}{\alpha+\beta}$.
By Theorem~2.1 of \cite{marchal2017}, $Y$ is sub-Gaussian with the variance proxy parameter
\begin{align*}
\Sigma\equiv  \frac{1}{4(\alpha+\beta+1)}.
\end{align*}
Hence by the definition of sub-Gaussian random variable, for any $c,t \in \mathbbm{R}$,
\begin{align*}
    pr\left(Y- \mu\geq t\right)\leq \exp\left\{-\frac{t^2}{2\Sigma}\right\}.
\end{align*}

Note that
\begin{align}\label{eq:p}
pr(L_1\leq R_2) = pr(R_2-L_1\geq 0),
\end{align}
where $L_1$ is sub-Gaussian with mean $x_1/(m_1+1)$ and $\Sigma_1=1/(m_1+2)$, and $R_2$ is sub-Gaussian with mean $(x_2+1)/(m_2+1)$ and $\Sigma_2=1/(m_2+2)$.
Therefore, Equation~\eqref{eq:p} equals $pr(Z\geq x_1/(m_1+1)-(x_2+1)/(m_2+1))$, where $Z$ is a sub-Gaussian with mean $0$ and $\Sigma_Z=1/(m_1+2)+1/(m_2+2)$.

Then we further have that Equation~\eqref{eq:p} equals
\begin{align*}
& pr\{Z\geq x_1/(m_1+1)-(x_2+1)/(m_2+1)\}\\
\leq  & \exp\left\{ - \frac{T^2}{2\Sigma_Z}  \right\}\\
\leq & \exp\left\{ - \tilde m T^2    \right\},
\end{align*}
where $\tilde m = \min_{i=1,\ldots,n} m_i$, and $T= x_1/(m_1+1)-(x_2+1)/(m_2+1)$.

Thus, the probability of a single intersection $2!\times q_1$ is bounded by
\begin{align*}
2! \times {n \choose 2} \times  \exp\left[ - \tilde m T^2    \right]
\preceq \exp\left\{c_1 \log n - c_2 (\log n)^{1+\epsilon/2}\right\},
\end{align*}
for any $\epsilon>0$, and some constants $c_1$ and $c_2$. We note that the coefficient $ {n \choose 2}$ refers to the number of possible pairs.

We next consider the case of $k$-intersections. We only need to consider two intervals corresponding to the two farthest $P_i$ among $k$ intervals.
 Thus, the probability of existing $k$-intersections $(k+1)!\times q_k$ is bounded by
\begin{align*}
 &(k+1)! \times {n \choose k} \times  \exp\left\{ - \tilde m \left[O\left(\frac{k}{n^2(\log n)^{\sqrt{\epsilon/2}}} \right)  \right]^2    \right\} \\
 \preceq & \exp\{ c_1(k\log n + k\log k) - c_2 k^2 (\log n)^{1+\epsilon/2} \},
\end{align*}
for any $\epsilon>0$, and some constants $c_1$ and $c_2$.
Because $\sum_{k=1}^{n-1} (k+1)!\times q_k \rightarrow 0$, we
conclude \eqref{eq:lemma2}.
\end{proof}

\section{Corollary~\ref{corollary} and its proof}
We present a Benstein-von Mises theorem for fiducial distribution associated with the empirical distribution function. This result can be viewed either as a special special case (without censoring) of \cite{cuihannig2019}, or as a particular case of exchangeably weighted bootstrap in \cite{praestgaard1993exchangeably}.

\begin{corollary}\label{corollary}
Assume the conditions of Theorem~\ref{main}. We have 
\[
n^{1/2}\{ \widetilde F(\cdot)-\widehat F_n(\cdot)  \} \rightarrow \{1-F(\cdot)\} B(\gamma(\cdot)),
\]
in distribution on Skorokhod space $\mathcal D[0,1]$ in probability, where $B$ is the Brownian Motion, $\gamma(t)=\int_0^t \frac{f(s)}{[1-F(s)]^2} ds= \frac{F(t)}{1-F(t)}$, $\widehat F_n$ is defined in Equation~(7) of Theorem~\ref{main}, and $\widetilde F$ is defined in Equation~(10) of Appendix~\ref{thm:3.1}.
\end{corollary}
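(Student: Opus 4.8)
The plan is to reduce the statement to the classical weak convergence of the uniform quantile process, composed with a data-dependent random time change, exploiting the fact that in the censoring-free oracle setting the fiducial weights $W^*_{(i)}$ are merely order statistics of $n$ i.i.d.\ $\tunif(0,1)$ variables, independent of the unobserved $P_i$. First I would write the centered process in closed form. For $s\in[P_{(i)},P_{(i+1)})$ one has $\widetilde F(s)=W^*_{(i)}$, and since exactly $i$ of the $P_j$ lie at or below $s$ one has $\widehat F_n(s)=i/n$; hence
\begin{equation*}
n^{1/2}\{\widetilde F(s)-\widehat F_n(s)\}=n^{1/2}\big(W^*_{(i)}-i/n\big),\qquad s\in[P_{(i)},P_{(i+1)}).
\end{equation*}
Letting $\Gamma_n$ be the empirical distribution function of $W^*_1,\ldots,W^*_n$, so that $\Gamma_n^{-1}(i/n)=W^*_{(i)}$, and $Q_n(t)\equiv n^{1/2}\{\Gamma_n^{-1}(t)-t\}$ the uniform quantile process, this identity reads $n^{1/2}\{\widetilde F(\cdot)-\widehat F_n(\cdot)\}=Q_n(\widehat F_n(\cdot))$, exactly. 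The key structural point is that the law of $Q_n$ does not depend on the $P_i$ at all, while the entire dependence on the data enters only through the time change $\widehat F_n$.

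Next I would invoke the classical result that $Q_n\Rightarrow -U$ on $\mathcal D[0,1]$, where $U$ is a standard Brownian bridge, and combine it with Glivenko--Cantelli, $\sup_s|\widehat F_n(s)-F(s)|\to 0$. Because the absolute continuity assumption makes the limiting time change $F$ continuous and nondecreasing, the random-time-change theorem (e.g.\ \cite{Billingsley1999}) gives $Q_n(\widehat F_n(\cdot))\Rightarrow -U(F(\cdot))$. Since $U$ is symmetric, $-U(F(\cdot))\overset{d}{=}U(F(\cdot))$, and the standard embedding of the bridge into Brownian motion, $U(t)=(1-t)B\big(t/(1-t)\big)$ evaluated at $t=F(s)$, gives exactly $\{1-F(s)\}B(\gamma(s))$ with $\gamma(s)=F(s)/\{1-F(s)\}$. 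A direct check of covariances confirms that for $t\le s$,
\begin{equation*}
\operatorname{cov}\big(U(F(s)),U(F(t))\big)=F(s\wedge t)-F(s)F(t)=F(t)\{1-F(s)\},
\end{equation*}
matching the asserted limit.

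The delicate part is the mode of convergence, namely \emph{in distribution in probability}: the data $(P_i,X_i)$ are held fixed and the weak limit is asserted for the conditional law induced by the fiducial randomness in $W^*$, in probability over the data. I would handle this by conditioning on $P_1,\ldots,P_n$: given the $P_i$, the quantile process $Q_n$ retains its $P$-free law, the time change $\widehat F_n$ becomes a deterministic nondecreasing function, and on the Glivenko--Cantelli event $\|\widehat F_n-F\|_\infty\to 0$ the continuity of $F$ lets the random-time-change step upgrade the unconditional convergence to conditional weak convergence; since that event has probability tending to one, the convergence holds in probability over the data. The main obstacle is making the composition step rigorous uniformly in $s$ up to the right endpoint, where $1-F(s)\to 0$ and $\gamma(s)\to\infty$; here the bridge pinning $U(F(1))=0$ damps the product and keeps the process tight in $\mathcal D[0,1]$. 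As a shortcut that bypasses these technicalities, the spacings $W^*_{(i)}-W^*_{(i-1)}$ form Dirichlet$(1,\ldots,1)$ weights, so the statement is a special case of the exchangeably weighted bootstrap central limit theorem of \cite{praestgaard1993exchangeably}, whose conclusion is precisely conditional weak convergence to the Brownian bridge in probability; equivalently it is the no-censoring specialization of the result in \cite{cuihannig2019}.
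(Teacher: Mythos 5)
Your proposal is correct, but your main argument takes a genuinely different route from the paper's. The paper's proof is essentially a citation: it invokes Theorem~2 of \cite{cuihannig2019} and checks that its Assumptions 1--3 hold in the censoring-free setting (taking their $\pi(p)=1-F(p)$, using absolute continuity of the true CDF, and verifying the integral limit defining $\gamma$); your closing ``shortcut'' via the Dirichlet$(1,\ldots,1)$ spacings and \cite{praestgaard1993exchangeably}, or the no-censoring specialization of \cite{cuihannig2019}, is precisely the paper's route (and is even advertised in the paper's preamble to the corollary). Your primary argument is instead self-contained and more elementary: the exact identity $n^{1/2}\{\widetilde F(\cdot)-\widehat F_n(\cdot)\}=Q_n(\widehat F_n(\cdot))$ rests on the two correct structural observations that the oracle weights $W^*_{(i)}$ are uniform order statistics independent of the $P_i$, and that the data enter only through the time change $\widehat F_n$; classical weak convergence of the uniform quantile process, Glivenko--Cantelli, and the random-time-change/continuous-mapping step (legitimate because the limiting bridge has continuous paths and $F$ is continuous) then deliver the limit, and your covariance computation confirms it agrees with $\{1-F(\cdot)\}B(\gamma(\cdot))$. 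What your route buys: it needs no censored-data machinery, it makes the mode of convergence transparent --- conditioning on the data leaves the law of $Q_n$ unchanged while making $\widehat F_n$ deterministic, so you in fact obtain conditional weak convergence along almost every data sequence, slightly stronger than the claimed in-probability statement --- and it isolates exactly where continuity of $F$ is used. What the paper's route buys is brevity and uniformity with the machinery already invoked in the proof of Theorem~3.1. Only cosmetic patches are needed in your write-up: adopt the convention $\Gamma_n^{-1}(0)\equiv 0$ so the identity also holds on $[0,P_{(1)})$, and note that for $s\geq P_{(n)}$ it reads $n^{1/2}(W^*_{(n)}-1)=Q_n(1)$, so both endpoints are covered.
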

\begin{proof}
By Theorem~2 of \cite{cuihannig2019}, we essentially need to check their Assumptions 1-3. Their Assumption~1 satisfies with their $\pi(p)= 1-F(p)$; their Assumption~2 satisfies as we assume true CDF is absolutely continuous; their Assumption~3 satisfies as $$\int_0^p \frac{g_n(s)}{\sum_{i=1}^n I(P_i\geq s)} d [\sum_{i=1}^n I(P_i\leq s)] \rightarrow \int_0^p \frac{f(s)}{[1-F(s)]^2} ds,$$ for any $p$ such that $1-F(p)>0$ and any sequence of functions $g_n \rightarrow \frac{1}{1-F}$ uniformly.
\end{proof}

\section{Remark on binomial and Poisson data}\label{sec:moments}

In the following two theorems for binomial and Poisson data respectively, we show implications of the fact that $pr(F \in Q_x(U^*,W^*))$ is proportional to the nonparametric likelihood function. In particular, maximizing the scaled fiducial probability in its limit provides exactly the underlying true CDF.

\begin{theorem}\label{main2}
Suppose $\Theta_i \equiv P_i$, and $X_i \mid P_i$ follows $\tbin(m,P_i)$, 
maximizing $\lim_{n \rightarrow \infty} [c_n \times pr(F\in Q_{\bx}(\bU^\star,\bW^\star))]^{1/n}$ leads to a CDF  matching the first $m$-moments of the true $F(p)$, where $c_n$ is the normalizing constant.
\end{theorem}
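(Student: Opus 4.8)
The plan is to exploit Remark~\ref{newremark}, which identifies $c_n \times pr(F\in Q_{\bx}(\bU^\star,\bW^\star))$, up to a factor not depending on $F$, with the nonparametric likelihood
\[
L(F)=\prod_{i=1}^n \int_0^1 b_{x_i}(p)\,dF(p),\qquad b_x(p)=\binom{m}{x}p^x(1-p)^{m-x}.
\]
Writing $q_x(F)=\int_0^1 b_x(p)\,dF(p)=pr_F(X=x)$ and letting $N_x$ be the number of indices $i$ with $x_i=x$, for $x\in\{0,\ldots,m\}$, I would take logarithms and group the observations by value to get $\frac1n\log L(F)=\sum_{x=0}^m (N_x/n)\log q_x(F)$. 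Since the $P_i$ are i.i.d.\ from the true CDF, call it $F_0$, the strong law of large numbers gives $N_x/n\to \pi_x\equiv q_x(F_0)$ almost surely, so for each fixed $F$,
\[
\lim_{n\to\infty}\bigl[c_n\times pr(F\in Q_{\bx}(\bU^\star,\bW^\star))\bigr]^{1/n}=C\,\exp\{\Lambda(F)\},\qquad \Lambda(F)\equiv\sum_{x=0}^m \pi_x\log q_x(F),
\]
with $C>0$ a constant free of $F$ (the factor $c_n$ does not depend on $F$ and only fixes the normalization). Maximizing the limit is therefore equivalent to maximizing $\Lambda$, which I recognize as the negative cross-entropy between the true law of $X$ and the model-induced law $\mathbf q(F)=(q_0(F),\ldots,q_m(F))$.

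Next I would reduce to a finite-dimensional problem: $\Lambda(F)$ depends on $F$ only through $\mathbf q(F)$, a point in the $m$-simplex. By Gibbs' inequality (equivalently, strict concavity together with $\log t\le t-1$), the map $\mathbf q\mapsto\sum_x\pi_x\log q_x$ over the whole simplex is uniquely maximized at $\mathbf q=\pi$. Because $\pi=\mathbf q(F_0)$ is itself attainable, the maximizer over the smaller set of attainable vectors $\{\mathbf q(F)\}$ is the same, namely $\mathbf q=\pi$. Hence any $F$ maximizing the limit must satisfy $q_x(F)=\pi_x=q_x(F_0)$ for every $x\in\{0,\ldots,m\}$.

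It then remains to translate $\mathbf q(F)=\mathbf q(F_0)$ into moment matching. I would use the factorial moments of the binomial: with $m^{(k)}=m(m-1)\cdots(m-k+1)$ and $x^{(k)}=x(x-1)\cdots(x-k+1)$, one has $\sum_{x=0}^m x^{(k)}q_x(F)=m^{(k)}\mu_k(F)$, where $\mu_k(F)=\int_0^1 p^k\,dF(p)$. The linear map $\mathbf q\mapsto\bigl(\sum_x x^{(k)}q_x\bigr)_{k=0}^m$ is given by the matrix with entries $x^{(k)}$, which is triangular with nonzero diagonal entries $k!$ (since $x^{(k)}=0$ for $x<k$ and $x^{(k)}=k!$ for $x=k$) and is therefore invertible. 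Consequently $\mathbf q(F)=\mathbf q(F_0)$ holds if and only if $\mu_k(F)=\mu_k(F_0)$ for $k=1,\ldots,m$; that is, the maximizer matches the first $m$ moments of the true $F$, which is the assertion.

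The main obstacle I anticipate is not a single hard estimate but the care needed at two points: first, justifying the pointwise-in-$F$ limit and the passage from maximizing the limit to characterizing its maximizers, including the bookkeeping for any $x$ with $\pi_x=0$ (handled by the convention $0\log 0=0$, and by noting that $q_x(F)=\pi_x$ on the positive-probability indices forces equality on the remaining ones through normalization); and second, cleanly establishing the invertibility of the factorial-moment system, so that the collapse of the $\tbin(m,\cdot)$ mixture onto the first $m$ moments is rigorous. Conceptually the heart of the argument is the identifiability fact that a binomial-$m$ mixture determines $F$ only through $\mu_1,\ldots,\mu_m$, which makes the moment-matching conclusion transparent once $\Lambda$ is seen to be maximized at $\mathbf q=\pi$.
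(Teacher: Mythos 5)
Your proposal is correct and follows essentially the same route as the paper's proof: both identify the scaled fiducial probability with the nonparametric likelihood, group observations by their observed value, pass via the law of large numbers to the cross-entropy limit $\sum_x \pi_x \log q_x(F)$, show any maximizer must satisfy $q_x(F)=q_x(F_0)$ for all $x\in\{0,\ldots,m\}$, and translate this into matching the first $m$ moments. The only differences are minor refinements on your part: you invoke Gibbs' inequality (plus an explicit attainability remark) where the paper uses Lagrange multipliers, and you spell out, via the invertible triangular factorial-moment system, the final equivalence between matching the binomial mixture probabilities and matching the first $m$ moments, which the paper simply asserts.
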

\begin{theorem}\label{remark:2}
Suppose $\Theta_i \equiv \Lambda_i$, and $X_i \mid \Lambda_i$ follows $\tpoi(\Lambda_i)$, 
maximizing $\lim_{n \rightarrow \infty} [c_n \times pr(F\in Q_{\bx}(\bU^\star,\bW^\star))]^{1/n}$ leads to true $F(\lambda)$ almost surely, where $c_n$ is the normalizing constant.
\end{theorem}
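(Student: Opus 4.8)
The plan is to mirror the argument for Theorem~\ref{main2}, replacing the binomial kernel with the Poisson kernel and then exploiting the fact that a Poisson mixture is \emph{fully} identifiable, whereas a binomial mixture with fixed $m$ pins down only the first $m$ moments. By the Poisson analogue of Remark~\ref{newremark}, the fiducial probability factorizes as
\[
pr(F\in Q_{\bx}(\bU^\star,\bW^\star)) \propto \prod_{i=1}^n p_F(x_i), \qquad p_F(x)\equiv \int_0^\infty \frac{\lambda^{x} e^{-\lambda}}{x!}\, dF(\lambda),
\]
where $p_F$ is the marginal pmf of $\tpoi$ data under mixing distribution $F$. Since the proportionality constant and the normalizer $c_n$ do not depend on $F$, absorbing them leaves the objective $[c_n\times pr(F\in Q_{\bx}(\bU^\star,\bW^\star))]^{1/n}=\bigl(\prod_{i=1}^n p_F(X_i)\bigr)^{1/n}$ up to an $F$-free factor, and taking logarithms turns maximization into maximizing the average log-likelihood $\frac1n\sum_{i=1}^n \log p_F(X_i)$.

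Next I would apply the strong law of large numbers. Writing $F_0$ for the true CDF, the $X_i$ are i.i.d.\ with marginal pmf $p_{F_0}$, and because $p_F(x)\le 1$ we have $\log p_F(X_i)\le 0$, so almost surely
\[
\frac1n\sum_{i=1}^n \log p_F(X_i) \to \sum_{x=0}^\infty p_{F_0}(x)\log p_F(x) = E_{F_0}[\log p_F(X)],
\]
provided the marginal entropy $-E_{F_0}[\log p_{F_0}(X)]$ is finite, which I would verify from the bounded-density hypothesis on $F_0$ (this controls the tails of $p_{F_0}$). Maximizing the limiting functional over $F$ is then a cross-entropy problem: by Gibbs' inequality, equivalently the nonnegativity of the Kullback--Leibler divergence, $E_{F_0}[\log p_F(X)] \le E_{F_0}[\log p_{F_0}(X)]$ with equality if and only if $p_F(x)=p_{F_0}(x)$ for every $x\in\{0,1,2,\ldots\}$. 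Hence any maximizer must reproduce the true marginal pmf exactly.

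The decisive final step is identifiability. The matching condition $p_F\equiv p_{F_0}$ is equivalent, through the probability generating function, to
\[
\int_0^\infty e^{\lambda(s-1)}\, dF(\lambda) = \int_0^\infty e^{\lambda(s-1)}\, dF_0(\lambda)\quad\text{for all } s\in[0,1],
\]
i.e., the Laplace transforms of $F$ and $F_0$ coincide on an interval. Because a Laplace transform determines a probability measure on $[0,\infty)$ uniquely (it is analytic, so agreement on a set with an accumulation point suffices), this forces $F=F_0$. This is exactly where the Poisson case departs from the binomial case of Theorem~\ref{main2}: the family $\{\lambda^x e^{-\lambda}/x!\}_{x\ge 0}$ supplies infinitely many moment-type constraints and is complete, so the whole distribution is recovered rather than only finitely many moments.

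I expect the main obstacle to be the analytic bookkeeping that makes ``maximizing the limit'' rigorous, rather than any deep new idea. Specifically, I would interpret the statement pointwise in $F$: for each fixed $F$ the SLLN gives the almost-sure limit above, the limiting functional $F\mapsto E_{F_0}[\log p_F(X)]$ is uniquely maximized at $F_0$ by the Gibbs-plus-identifiability argument, and this maximizer is what is meant. The one genuinely delicate point is controlling $\log p_F(x)$ for large $x$, where $p_F$ can be arbitrarily small, so that the cross-entropy series is well-defined and the supremum is attained uniquely; the bounded-density assumption on $F_0$ should be enough to bound the relevant tail contributions and to rule out degenerate maximizers.
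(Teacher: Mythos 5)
Your proposal is correct and follows essentially the same route as the paper's proof: both identify the scaled fiducial probability with the Poisson mixture likelihood, pass to the limiting cross-entropy functional (the paper via the empirical frequencies $n_k/n$ converging to the true marginal pmf, you via the SLLN applied to the average log-likelihood), show that any maximizer must reproduce the true marginal pmf (the paper by the Lagrange-multiplier solution of $\max \sum_k y_k \log x_k$, you by Gibbs'/KL nonnegativity --- the same fact), and conclude $F = F_0$ from analyticity of the Laplace transform. The only cosmetic difference is that the paper phrases identifiability as matching all derivatives of the Laplace transform at $1$ while you phrase it as agreement of the transforms on an interval via the generating function; both rest on the identical analyticity argument.
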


The details of proofs are provided below.

\subsection{Proof of Theorem~\ref{main2} \label{thm:2}}

\begin{proof}
Recall the data generating equation in \eqref{eq:DGEinApp}, 
where $G_i$ is the CDF of binomial distribution.
The fiducial probability is
\begin{align}
&pr(F\in Q_{\bx}(\bU^\star,\bW^\star))\nonumber \\  \propto & \prod_{i=1}^{n} \left[\int_0^1  {{m}\choose{x_i}} p^{x_i} (1-p)^{m-x_i} dF(p) \right] \nonumber \\
 =& \exp\left\{  \sum_{i=1}^{n}  \log \left( E_F \left[{{m}\choose{x_i}} p^{x_i} (1-p)^{m-x_i} \right]  \right)  \right\} \nonumber \\
 = &\exp\bigg \{n_{m} \log(E_F[P^m]) + n_{m-1} \log\left(E_F\left[ {{m}\choose{m-1}} P^{m-1}(1-P)\right] \right) + \cdots  \nonumber\\
& + n_{0} \log(E_F[(1-P)^m]) \bigg \}, \nonumber
\end{align}
where $n_{k}$ is the number of samples with $X_i=k$, and $E_F$ refers to the expectation with respect to $F$ that is evaluated. 
Thus, as $n$ goes to infinity, 
\begin{align}
& [c_n \times pr(F\in Q_{\bx}(\bU^\star,\bW^\star))]^{1/n} \nonumber \\
\rightarrow &\exp\bigg \{ E[P^m] \log(E_F[P^m]) + E\left[ {{m}\choose{m-1}} P^{m-1}(1-P)\right] \log\left(E_F\left[ {{m}\choose{m-1}} P^{m-1}(1-P)\right] \right)\nonumber\\&+ \cdots   + E[(1-P)^m] \log(E_F[(1-P)^m]) \bigg \},\label{eq:loglikelihood3}
\end{align}
where $c_n$ is the normalizing constant, and $E$ refers to the expectation with respect to the true distribution function that generates data. 
 By the method of Lagrange multipliers, \begin{align*}
 H(x) = \sum_i y_i \log x_i~~\text{subject to}~~\sum_i x_i=1,\sum_i y_i=1,
 \end{align*}
 is maximized with respect to $x$ by setting $x_k=y_k$, $k\in \mathbbm{R}^+$.
Maximizing Equation~\eqref{eq:loglikelihood3} gives
\begin{align*}
E[P^m]&=E_F[P^m], \\
E[P^{m-1}(1-P)]&=E_F[P^{m-1}(1-P)], \\
& \ldots \\
E[(1-P)^m]&=E_F[(1-P)^m].
\end{align*}
 The above equations are restrictions on the first $m$-moments of $P$, which completes the proof.
\end{proof}

\subsection{Proof of Theorem~\ref{remark:2} \label{thm:3}}
\begin{proof}
Recall the data generating equation is 
\begin{align*}
X_i= G_i^{-1}(U_i,\Lambda_i),
\quad 
\Lambda_i=F^{-1}(W_i),
\end{align*}
where $G_i$ is the CDF of Poisson distribution.
The fiducial probability is
\begin{align}
& pr(F\in Q_{\bx}(\bU^\star,\bW^\star)) \nonumber \\
\propto & \prod_{i=1}^n \int_0^\infty   \frac{ \lambda^{x_i} \exp\{-\lambda \}}{x_i!} dF(\lambda) \nonumber \\
=& \left\{\frac{E_F[\exp(-\Lambda)]}{0!}\right\}^{n_0}\times \left\{\frac{ E_F[\Lambda \exp(-\Lambda)]}{1!} \right \}^{n_1} \cdots \times \left\{\frac{ E_F[\Lambda^k \exp(-\Lambda)]}{k!} \right \}^{n_k}\times \cdots \nonumber\\
= & \exp\left \{n \left [ \frac{n_0}{n} \log \frac{E_F[\exp(-\Lambda)]}{0!}  +\frac{n_1}{n} \log \frac{E_F[\Lambda\exp(-\Lambda)]}{1!} + \cdots + \frac{n_k}{n} \log \frac{E_F[\Lambda^k\exp(-\Lambda)]}{k!} +\cdots \right]  \right\}, \nonumber
\end{align}
where $n_k$ is the count of $X_i=k$, and $E_F$ refers to the expectation with respect to $F$ that is evaluated.
 Thus, as $n$ goes to infinity, 
\begin{align}
& [c_n \times pr(F\in Q_{\bx}(\bU^\star,\bW^\star))]^{1/n} \nonumber \\
 \rightarrow & \exp\bigg \{ \bigg [ \frac{E[\exp(-\Lambda)]}{0!} \log \frac{E_F[\exp(-\Lambda)]}{0!}  + \frac{E[\Lambda \exp(-\Lambda)]}{1!} \log \frac{E_F[\Lambda\exp(-\Lambda)]}{1!} + \cdots \nonumber \\ & +  \frac{E[\Lambda^k \exp(-\Lambda)]}{k!} \log \frac{E_F[\Lambda^k\exp(-\Lambda)]}{k!} +\cdots \bigg]  \bigg\},
\label{eq:loglikelihood2} 
\end{align}
where $c_n$ is the normalizing constant, and $E$ refers to the expectation with respect to the true distribution function that generates data. 
 By the method of Lagrange multipliers, maximizing Equation~\eqref{eq:loglikelihood2} gives
\begin{align*}
\frac{E[\Lambda^k \exp(-\Lambda)]}{k!} =\frac{E_F[\Lambda^k \exp(-\Lambda)]}{k!}, ~ k \in \mathbbm{R}^+.
\end{align*}
 The above equations are essentially the restrictions on all derivatives of Laplace transform at 1.
The property of the Laplace transform being analytic in the region of absolute convergence implies
the uniqueness of a distribution,  
which completes the proof.
\end{proof}

\section{Additional simulation results with $n=1000$}

In this section, we present the results of both point estimates and 95\% CIs for $n=1000$. The simulations were replicated 500 times for each scenario. We again observe a consistent pattern that the proposed methods are comparable to and sometimes better than competing methods.


\begin{table}[!h]
\centering
\scalebox{1.1}{
\begin{tabular}{ccccccc}
  \hline
 Scenario & $p$ & F & g & bc & BP & BA \\
  \hline
      \multirow{2}{*}{1} & 0.15 & 0.50 & 1.79 & 1.74 & 10.65  & 0.04\\
& 0.25 & 1.49 & 1.46  &1.44 &52.03 &  0.65\\
& 0.50 &3.92  & 9.54  &9.52 &27.67  & 2.94\\
&0.75 & 1.32 &  0.61  & 0.62 & 5.48 &  0.67\\
& 0.85 & 0.48 &  0.17  & 0.16 & 1.30   & 0.04\\
   \hline
    \multirow{2}{*}{2} & 0.15 &  5  & 10 &   9  &122  &  2* \\
&0.25 & 4 &  8  &  8  & 19 &  14*\\
&0.50 &  3  &  3  &  3  &  3  &  4*\\
&0.75 & 4  &  3  &  3  & 17  & 16*\\
&0.95 & 4   & 1 &   1 &  18  &  2*\\
   \hline
   \multirow{2}{*}{3}  & 0.15 & 0.02 &  0.33  & 0.30   & 0.02  &  0.003\\
& 0.25  & 0.28 & 1.27  &1.21 &  0.65 &  0.10\\
& 0.50 & 3.02 & 6.98 & 6.95 & 2.81 & 2.21\\
&0.75  & 0.29 &  0.11 &  0.11 &  0.54 &  0.12\\
&0.85   & 0.02  & 0.02  &  0.01  &  0.02  &  0.002 \\
   \hline
 \multirow{2}{*}{4} & 0.15 & 3  &  5   & 5  &  3  &  2\\
& 0.25 &  3   & 2  &  2  &  3  &  2\\
&0.50 & 2  &  2  &  2 &   2  &  2\\
& 0.75 & 3  &  2  &  2  &  5  &  2\\
& 0.85 & 3 &  34  & 34 &   8   & 2\\
   \hline
  \multirow{2}{*}{5}  &   0.15 &  2  &  3  &  3   & 2  &  2*\\
     & 0.25 & 1  &  1  &  1  &  1  &  1* \\
     & 0.50 &0.20  &  0.26  &  0.25  &  0.20  &  0.17* \\
     & 0.75 & 0.02  & 0.11  & 0.10  &  0.02  &  0.02* \\
     & 0.85 & 0.01  &  0.04  &  0.04  &  0.01  &  0.01* \\
   \hline
\end{tabular}
}
\caption{MSE ($\times 10^{-4}$) of point estimates for $F(p)$ of each scenario.
``F'' denotes the fiducial point estimates;
``g'' denotes Efron's $g$-modeling without bias correction;  ``bc'' denotes Efron's $g$-modeling with bias correction;
``BP'' denotes the bootstrap method; ``BA'' denotes the Bayesian method.
*The Bayesian results for Scenarios 2, 5 are reported based on 495, 476 replications as 5, 24 runs failed due to an error in R package \texttt{dirichletprocess}.}
\end{table}

\begin{table}[!h]
\centering
\scalebox{1.1}{
\begin{tabular}{cccccccc}
  \hline
 Scenario & $p$ & M & C & g & bc & BP & BA \\
  \hline
      \multirow{2}{*}{1} &0.15 & 98 &  99 &  28  & 31  &  0  & 93\\
&0.25 & 100 & 100 &  91 &  92 &   0  & 83\\
&0.50 &  100 & 100 &  84 &  84&   10  & 70\\
&0.75 & 100 & 100 &  98  & 98  & 17 &  80\\
&0.85 & 98 &  99  & 98 &  98  & 10 &  88\\
   \hline
 \multirow{2}{*}{2}   & 0.15&   93 &  98 &  10  & 12  &  0 &  89*\\
& 0.25 &  100 & 100  & 82  & 82 &  17 &  87*\\
& 0.50 & 97  & 98 &  93  & 93  & 91  & 65*\\
&0.75 &  100  & 100  & 97  & 97  & 15 &  85*\\
& 0.85 & 95  & 97 &  98 &  98  &  0 &  89*\\
   \hline
 \multirow{2}{*}{3}   & 0.15&  99 &  99 &  74  & 80 &  72 &  31\\
& 0.25 &   98 &  99  & 68 &  71  & 77  & 29\\
& 0.50 & 99 & 100  & 71  & 71  & 94  & 20\\
&0.75 & 99  & 99  &100 &100  & 82  & 28\\
& 0.85 & 99 &  99 & 100  &100  & 69  & 28\\
   \hline
 \multirow{2}{*}{4}   & 0.15&  100 & 100 &  75 &  76  & 89  & 49\\
& 0.25 &  95  & 95  & 96 &  96 &  91  & 13 \\
& 0.50 & 96  & 96 &  96 &  96 &  96 &   3\\
&0.75 &94  & 96 &  96 &  96  & 85 &  12\\
& 0.85 & 100 & 100 &   2  &  2  & 55  & 53\\
   \hline
  \multirow{2}{*}{5}  &   0.15 &  99 &  100 &  87 &  87  & 93 &  22* \\
     & 0.25 & 99  & 99 &  96 &  96 &  96 &  21* \\
     & 0.50 & 98  & 98  & 98 &  98  & 94  & 29* \\
     & 0.75 & 99 & 100  & 99  & 99  & 90 &  29* \\
     & 0.85 & 99 &  99  & 100 & 100  & 56  & 26* \\
   \hline
\end{tabular}
}
\caption{Coverage (in percent) of 95\% CIs for $F(p)$ of each scenario. ``M'' denotes mixture GFD confidence intervals;  ``C'' denotes conservative GFD confidence intervals;
``g'' denotes Efron's $g$-modeling without bias correction;  ``bc'' denotes Efron's $g$-modeling with bias correction;
``BP'' denotes the bootstrap method; ``BA'' denotes the Bayesian method.
*The Bayesian results for Scenarios 2, 5 are reported based on 495, 476 replications as 5, 24 runs failed due to an error in R package \texttt{dirichletprocess}.}
\end{table}

\begin{table}[!h]
\centering
\scalebox{1.1}{
\begin{tabular}{cccccccc}
  \hline
 Scenario & $p$ & M & C & g & bc & BP & BA \\
  \hline
      \multirow{2}{*}{1} & 0.15 & 27  & 29  & 23 &  23 &  23 &   8\\
&  0.25 &  64  & 69 &  39 &  39 &  40 &  26\\
& 0.50 & 147 & 156  & 86 &  86   &62 &  47\\
&0.75 & 64  & 69 &  35   &35  & 32  & 26\\
& 0.85 & 26 &  29   &15  & 15  & 16 &  8\\
   \hline
    \multirow{2}{*}{2} &0.15 & 69  & 75 &  43 &  43 &  43 &  45*\\
& 0.25 & 132 & 142 &  74 &  74  & 57 & 120*\\
&0.50 & 72 &  76 &  65  & 65 &  62  & 35*\\
& 0.75 & 132 & 142  & 72 &  72 &  51&  121*\\
& 0.85 & 70 &  75  & 35 &  35 &  32  & 45*\\
   \hline
   \multirow{2}{*}{3}  & 0.15 &  7  &  7  & 12 &  12  &  3 &   0.49\\
& 0.25 &  25 &  27  & 24  & 24  & 19  &  3\\
& 0.50 & 92  & 96  & 60 &  60 &  62  &  8\\
& 0.75  &  25  & 27  & 17 &  17  & 18  &  3\\
 & 0.85 & 7  &  7  &  4 &   4  &  3  &  0.49\\
   \hline
 \multirow{2}{*}{4}  &0.15&  99 & 105 &  56  & 56 &  57  & 24\\
& 0.25 & 65 & 67 &  62  & 62  & 62  &  6\\
&0.50&  62 &  63  & 62  & 62 &  62 &   0.02 \\
&0.75&  65 &  67 &  61  & 61  & 62  &  5\\
&0.85& 99  & 105   & 57 &  57  & 55  & 24\\
   \hline
  \multirow{2}{*}{5}  &   0.15 & 81 &  86 &  52  & 52 &  56  &  8* \\
     & 0.25 & 57 &  60 &  44 &  44  & 43  &  6* \\
     & 0.50 & 21 &  22&   20 &  20  & 17  &  3*\\
     & 0.75 &  7 &   8 &  11 &  11  &  5  &  1* \\
     & 0.85 &  5  &  6 &   6  &  6   & 2   & 1* \\
   \hline
\end{tabular}
}
\caption{Mean length ($\times 10^{-3}$) of 95\% CIs for $F(p)$ of each scenario.  ``M'' denotes mixture GFD confidence intervals;  ``C'' denotes conservative GFD confidence intervals;
``g'' denotes Efron's $g$-modeling without bias correction;  ``bc'' denotes Efron's $g$-modeling with bias correction;
``BP'' denotes the bootstrap method; ``BA'' denotes the Bayesian method.
*The Bayesian results for Scenarios 2, 5 are reported based on 495, 476 replications as 5, 24 runs failed due to an error in R package \texttt{dirichletprocess}.}
\end{table}

\section{Trace plots of the proposed Gibbs sampler for intestinal surgery data}
In this section, we present the trace plots of the proposed Gibbs sampler for intestinal surgery data.
As can be seen from these figures, the fiducial MCMC samples have good variability and mix well.

\begin{figure}[H]
\begin{center}
\includegraphics[width=12cm]{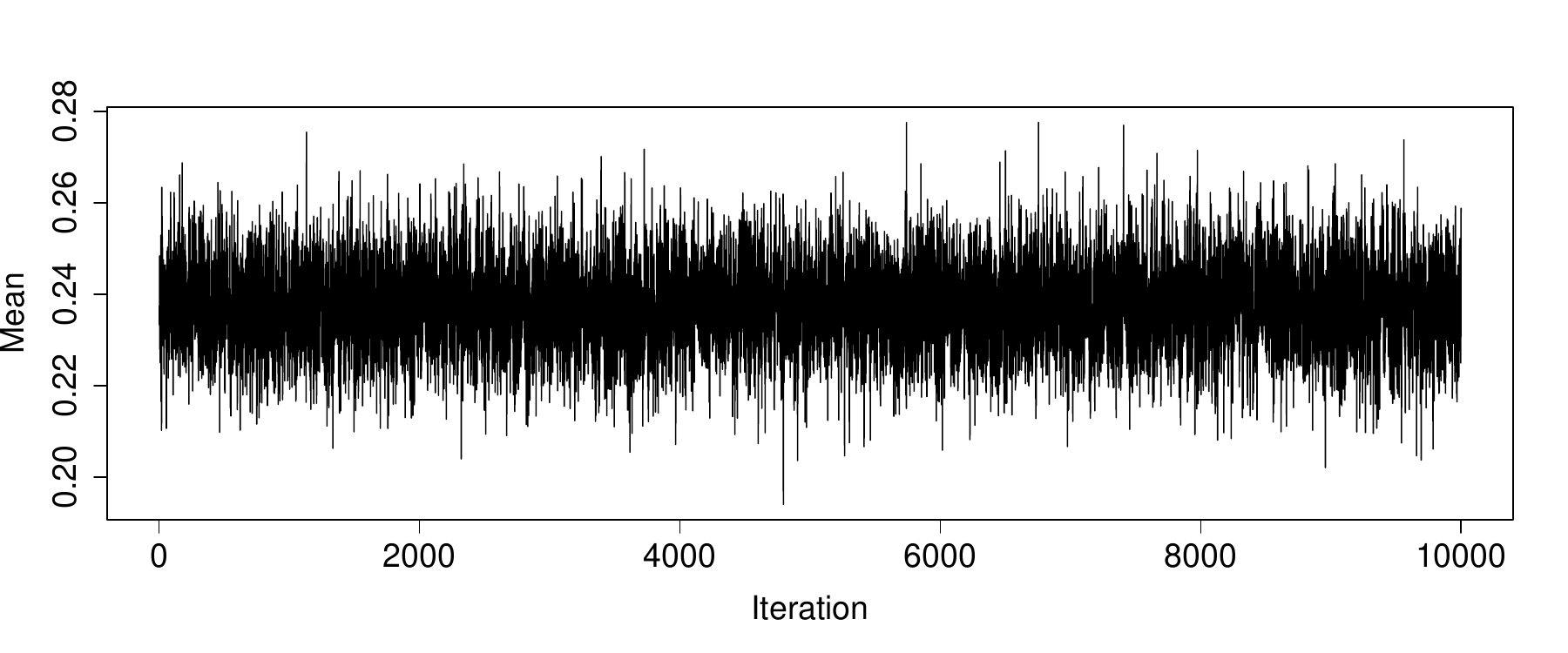}
\includegraphics[width=12cm]{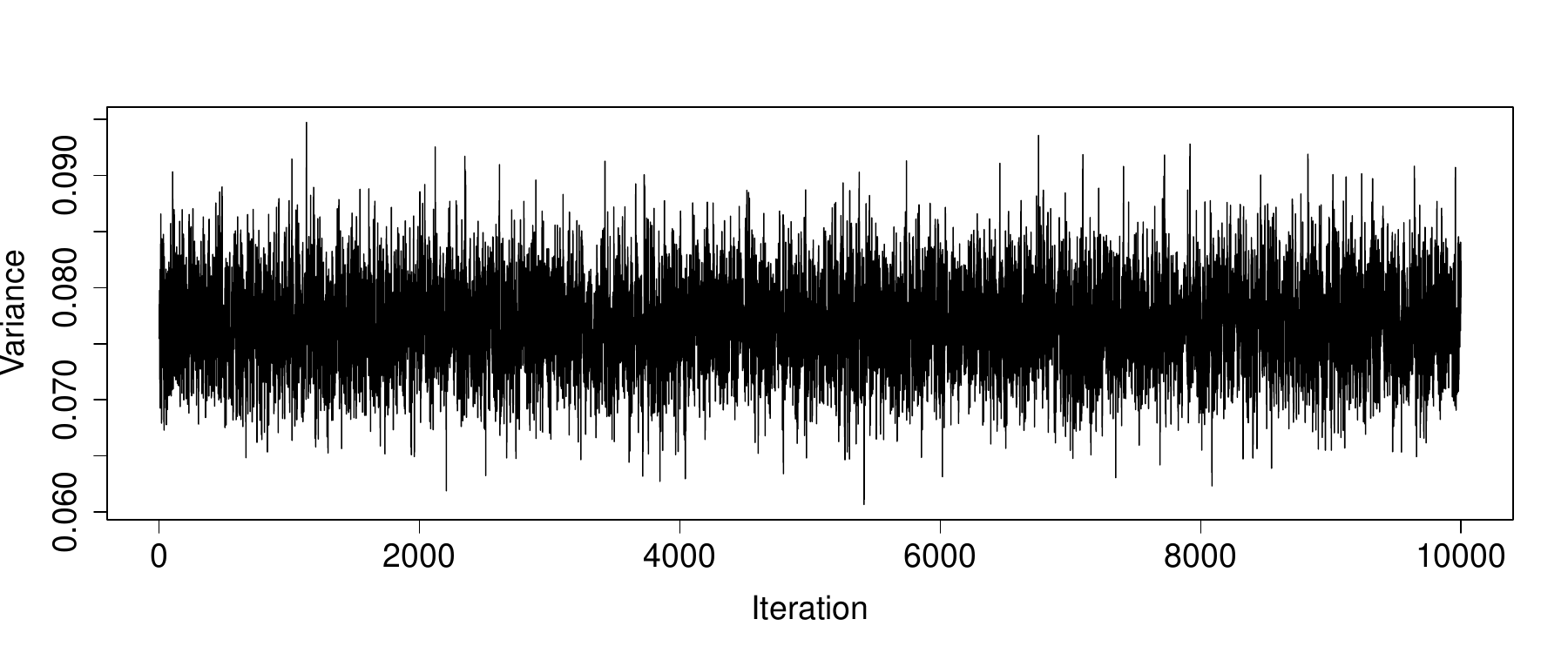}
\end{center}
\caption{Trace plots of mean and variance of $P\sim [F^L(p)+F^U(p)]/2$ for intestinal surgery data, respectively.}
\end{figure}


\section{Additional plots for intestinal surgery data}
In Figure~\ref{pic:real2}, we plot the point estimates and 95\% confidence intervals of $F(p)$ for the Bayesian method and nonparametric bootstrap, respectively.

\begin{figure}[H]
\begin{center}
\includegraphics[width=5.5cm]{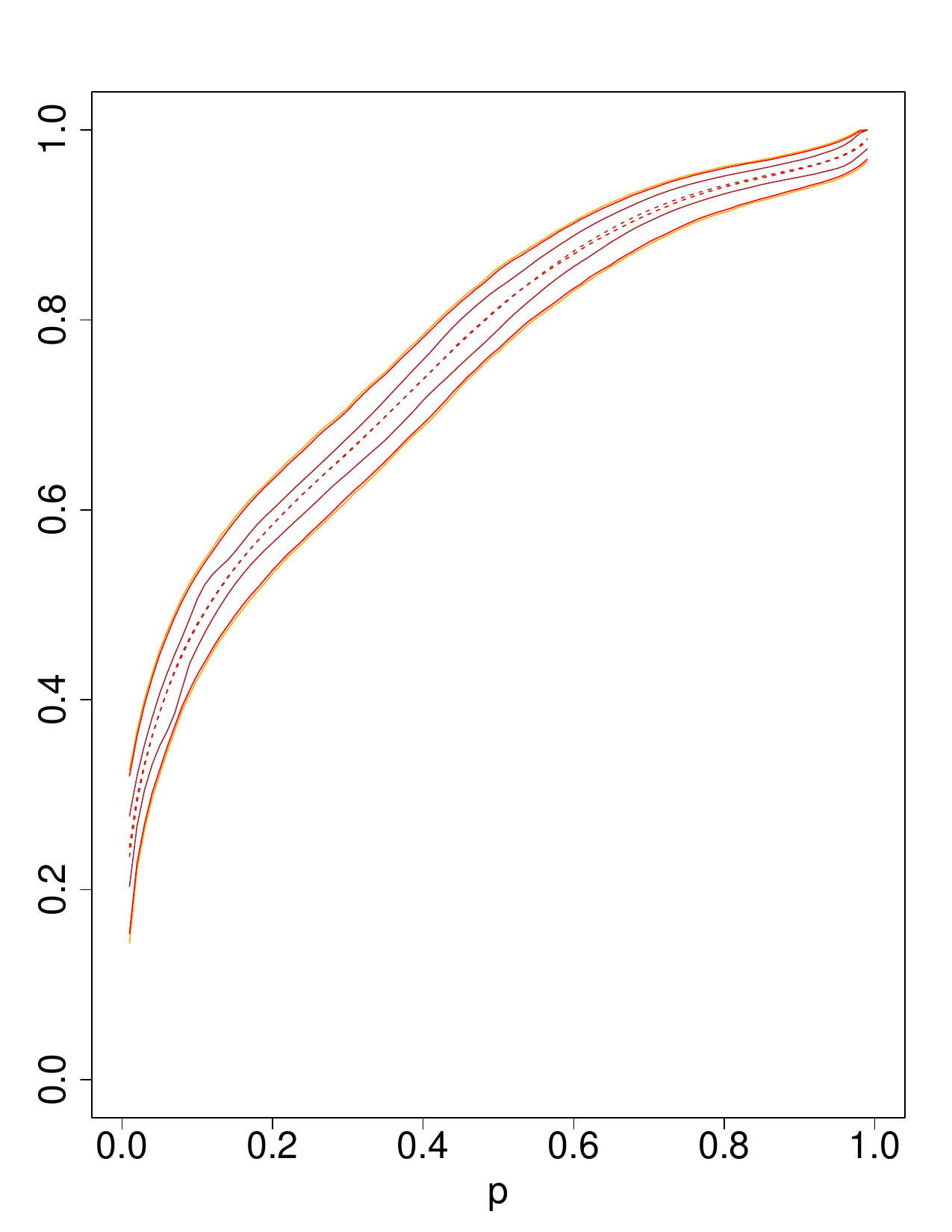}
\includegraphics[width=5.5cm]{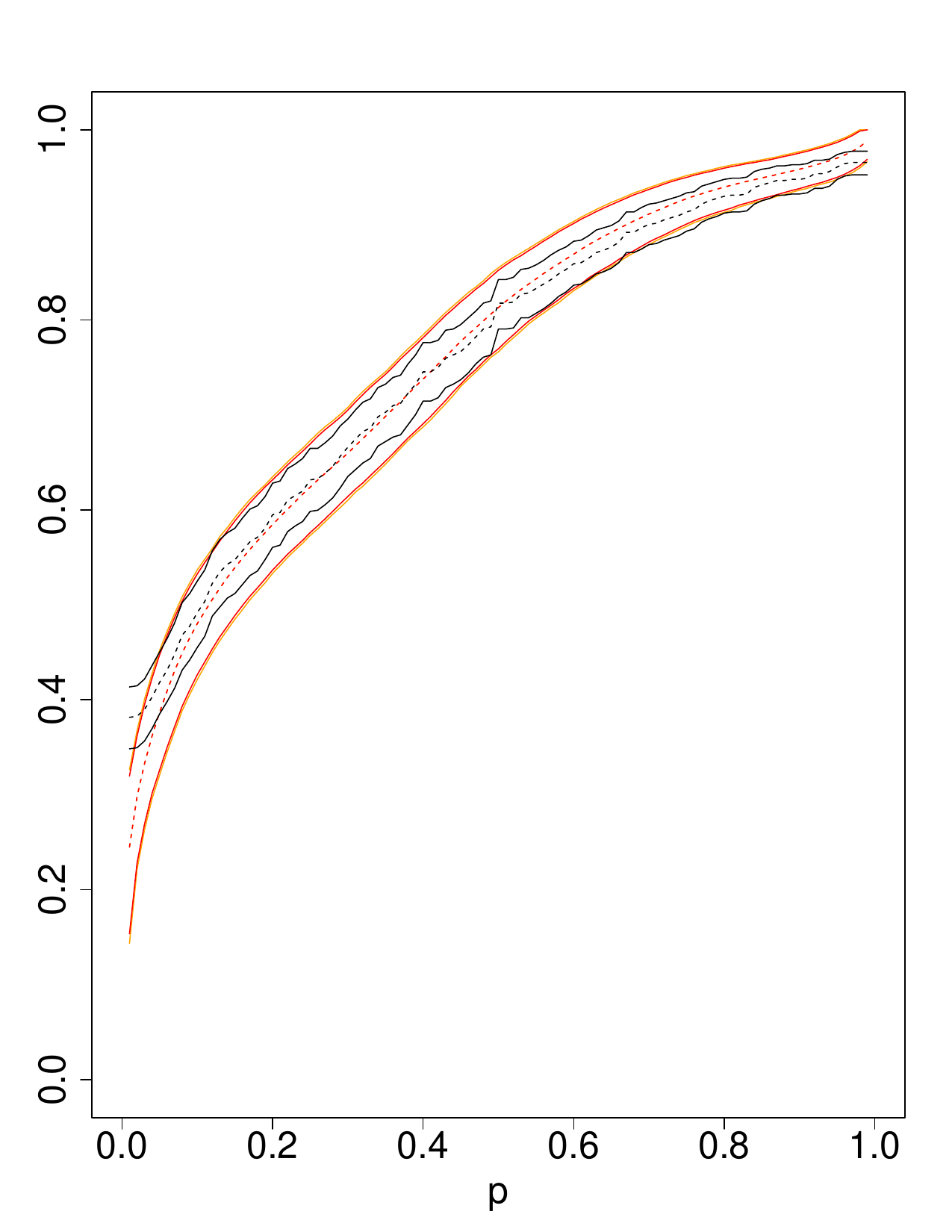}
\end{center}
\caption{Estimated CDF (dashed line) and 95\% CIs for $F(p)$. Left panel: fiducial versus Bayesian.
Right panel: fiducial versus nonparametric bootstrap. The red and orange curves are mixture and conservative confidence intervals, respectively.
The brown and black curves are Bayesian and bootstrap confidence intervals, respectively.
}
\label{pic:real2}
\end{figure}

\bibliographystyle{asa}
\bibliography{fiducial}

\end{document}